\newtheorem{theorem}{Theorem}
\newtheorem{lemma}{Lemma}
\newtheorem{corollary}{Corollary}
\newtheorem{definition}{Definition}
\newtheorem{fact}{Fact}
\newcommand{\bra}[1]{\mbox{$\left\langle #1 \right|$}}
\newcommand{\ket}[1]{\mbox{$\left| #1 \right\rangle$}}
\newcommand{\comments}[1]{}
\newcommand{\CZ}{\textrm{CZ}}
\newcommand{\Tr}{\textrm{Tr}}
\begin{document}
\preprint{APS/123-QED}
\title{Constructing Multipartite Bell inequalities from stabilizers}

\date{\today}

\author{Qi Zhao}
\email{zhaoq@umd.edu}
\affiliation{Joint Center for Quantum Information and Computer Science, University of Maryland, College Park, Maryland 20742, USA}
\author{You Zhou}
\email{you\_zhou@g.harvard.edu}
\affiliation{Department of Physics, Harvard University, Cambridge, Massachusetts 02138, USA}

\begin{abstract}
Bell inequality with self-testing property has played an important role in quantum information field with both fundamental and practical applications. However, it is 
generally challenging to find Bell inequalities with self-testing property for multipartite states and actually there are not many known candidates. 
In this work, we propose a systematical framework to construct Bell inequalities from stabilizers which are maximally violated by general stabilizer states, with two observables for each local party. 
We show that the constructed Bell inequalities can 
self-test any stabilizer state which is essentially device-independent,  if and only if 
these stabilizers can uniquely determine the state in a device-dependent manner. This bridges the gap between device-independent and device-dependent verification methods. 
Our framework can provide plenty of Bell inequalities for self-testing stabilizer states. Among them, we give two families of Bell inequalities with different advantages: (1) a family of Bell inequalities with a constant ratio of quantum and classical bounds using $2N$ correlations, (2)  \emph{Single pair} inequalities improving on all previous robustness self-testing bounds using $N+1$ correlations, which are both efficient and suitable for realizations in multipartite systems. 
Our framework can not only inspire more fruitful multipartite Bell inequalities from conventional verification methods, but also pave the way for their practical applications.


\end{abstract}

\maketitle

\emph{Introduction.}---
Bell inequalities, as a test for quantum correlations, can distinguish quantum physics from its classical counterpart \cite{bell,Brunner2014nonlocality}. 
They not only play a fundamental role in quantum physics, but also demonstrate practical applications in quantum information processing, such as the quantum key distribution \cite{Mayers98,Vazirani14,Miller14,arnon2018practical}, quantum randomness generation \cite{Colbeck09,Pironio10,ma2016quantum,Acin_Certified_2016,liu2018device,bierhorst2018experimentally}, blind quantum computing \cite{reichardt2013classical,PhysRevLett.119.050503}, and quantum resource detection \cite{Moroder13,liang15,PhysRevX.7.021042}. 

One of the most striking applications of Bell inequality is the simultaneous verification of quantum states and measurements, based on the maximal violation of it \cite{Mayers98,mckague2012robust}. The verification only relies on the input and output statistics without the trust of realization of devices, which is different from traditional quantum tomography and other device-dependent methods \cite{Vogel1989Determination,Paris2004esimation}. 
This phenomenon is usually referred as self-testing. When the violation of Bell inequality is only close to its maximal quantum value, 
one can also estimate the fidelity between the underlying state and the target state which is referred as robust self-testing. 
Tremendous efforts have been made to self-test various types of quantum states \cite{wu2014robust, wu2016independent, coladangelo2017all, PhysRevLett.119.040402, baccari2018scalable}, and improve robust self-testing performance which is helpful in the realistic implementation \cite{ mckague2012robust,miller2013,kaniewski2016analytic}.



Even though Bell inequality and self-testing are of fundamental and practical significance,
it is generally challenging and not clear to propose Bell inequalities with the self-testing manner, especially for multipartite states, due to the exponentially increase of the dimension of total Hilbert space. Some interesting and inspiring attempts have been made focusing on 
high-dimensional maximally entangled states \cite{PhysRevLett.119.040402}, and graph states \cite{PhysRevA.71.042325,PhysRevLett.95.120405,PhysRevA.77.032108,baccari2018scalable}, 
nevertheless in Ref.~\cite{PhysRevA.71.042325,PhysRevLett.95.120405,PhysRevA.77.032108} the self-testing property is not explored.  Ref.~\cite{baccari2018scalable} proposed an inspiring family of Bell inequalities with self-testing manner, which are constructed from the generators of graph states. 
The ratio of the quantum and classical bounds tends to a constant in an asymptotic case. 
However, the candidates for the Bell inequality of general stabilizer states with the  self-testing manner are still limited and there is also no systematical and concrete framework for the construction of them.  



Meanwhile, 
device-dependent multipartite entanglement witnesses \cite{TERHAL2001witness,GUHNE2009detection,Friis2019Reviews} and state verification methods \cite{sam2018optimal,zhu2019efficient} 
have been extensively studied with the development of large-scale entanglement preparations \cite{Wang2018,Friis18,Gong2019genu}. 
Many efforts have been made aiming at graph states or general stabilizer states which are 
important resources in quantum information processing tasks, e.g. measurement-based quantum computing \cite{onewayQC,Raussendorf2003Measurement}, quantum routing and quantum networks \cite{Kimble2008internet,Perseguers2013network}.
The entanglement witness and the state verification of stabilizer states can be greatly simplified by utilizing the property of stabilizers \cite{Toth2005Detecting,sam2018optimal,Hayashi2015Stabilizer,Fujii2017fault,Lu2018Structure,zhou2019detecting,You2020coherent,Pei2019verification}.
Focusing on the device-independent scenario, it is thus an interesting open problem to ask whether the same properties can be applied for constructing Bell inequalities.

In this letter, we propose a systematical framework for constructing Bell inequalities with two local observables from stabilizers. We show that the necessary and sufficient condition to realize the self-testing is that a set of stabilizers used in the construction can uniquely determine the state device-dependently,
which closes the gap between the device-dependent and -independent verifications. Taking advantage of the framework, 
we provide more choices of Bell inequalities with self-testing. As applications,
two families of Bell inequalities are proposed showing different advantages. 
For any stabilizer states, we construct Bell inequalities with constant ratio of quantum and classical bounds with a linear number of correlations, 
improving the result 
in \cite{baccari2018scalable}. To further enable the robust self-testing, we 
take 3, 4-qubit GHZ states and cluster states for example. In particular, also with a linear number of correlations, we construct a new type of Bell inequalities, referred as \emph{Single pair} inequality showing the best known robust self-testing bound, which outperforms Mermin inequality \cite{mermin1990extreme,kaniewski2016analytic} and Bell inequalities proposed in \cite{baccari2018scalable}.
As a side result, our Bell inequalities can also serve as device-independent entanglement witness, which provides more alternatives for the entanglement detection in the device-dependent scenario.


\emph{Graph states and the stabilizer formalism.---}
Stabilizer states \cite{gottesman1997stabilizer,Nielsen2011Quantum} can be transformed from graph states via local unitary operations \cite{Hein2006Graph}. Thus in the following we discuss graph states without loss of generality. 

A graph state can be defined based on a graph $\mathcal{G}=(V,E)$, with vertices set $V=\{1,2,\dots,N\}$ and edges set $E\subset[V]^2$. Two vertexes $i$, $j$ are neighborhood if there is an edge $(i,j)$ connecting them, and the neighborhood set of vertex $i$ is denoted as $n_i$. Let the qubits take the role of the vertices and the edges represent the operations between the qubits, that is, the Controlled-$Z$ operation, a graph state can be written as,
\begin{equation}\label{}
\ket{\psi_\mathcal{G}}=\prod_{(i,j)\in E}\CZ^{\{i,j\}}\ket{+}^{\otimes N},
\end{equation}
where $\ket{+}=(\ket{0}+\ket{1})/\sqrt{2}$ is the eigenstate of the Pauli $X$ matrix and $\CZ^{\{i,j\}}$ is the Controlled-$Z$ gate,
$\CZ^{\{i,j\}}=\ket{0}_i\bra{0}\otimes \mathbb{I}_j+\ket{1}_i\bra{1}\otimes Z_j$.
Hereafter $X_i,Y_i,Z_i$ denote the Pauli operators of the qubit $i$.
Graph state can be uniquely determined by $N$ generators,
\begin{equation}\label{Eq:Stab}
G_i=X_i\bigotimes_{j\in n_i} Z_j,
\end{equation}
which commute with each other and staisfy $G_i\ket{\psi_\mathcal{G}}=\ket{\psi_\mathcal{G}},\, \forall i$. that is, the unique eigenstate with eigenvalue $1$ for all the $N$ generators. As a result, a graph state can also be written as a product of projectors of the generators,
\begin{equation}\label{Eq:Gsta}
\ket{\psi_\mathcal{G}}\bra{\psi_\mathcal{G}}=\prod_{i=1}^N\frac{G_i+\mathbb{I}}{2}.
\end{equation}
All the other stabilizers can be generated by the multiplication of these generators. 
The property of stabilizers can be utilized to verify the graph states and construct entanglement witness efficiently \cite{Toth2005Detecting, zhou2019detecting}. 

\emph{Constructing Bell inequalities from pairable stabilizers.}---
For a $N$-party graph state $\psi_{\mathcal{G}}$ from graph $\mathcal{G}$, we label each party with registers $1,2\dots,N$. We refer the graph $K$-colorable, if
one can label the graph with $K$ different colors requiring that there is no pair of adjacent vertices of the same color. According to this definition, we can divide all the $N$ vertices into $K$ disjoint subsets $C_{k={1,2,\cdots,K}}$, such that there is no edge inside each $C_k$. The smallest $K$ is referred as the chromatic number of $\mathcal{G}$.

Considering that each vertex $i$ owns a generator $G_i$ and the multiplication of generators from different color subsets can induce more measurement settings ($Y$ measurement). In this letter, we mainly explore the efficient Bell inequalities with two local observables for each party. Thus, we focus on the stabilizers $S$ which are generated by the generators from the same color subset $C_k$,
\begin{equation}
S=\prod_{i\in C_k} G_i.
\end{equation}
These stabilizers can be represented by tensor product of Pauli operators, $S=\bigotimes_{i}M_i$, where $M_i\in \{X,Z, \mathbb{I}\}$. For simplicity, we define a sequence $(s_1,\cdots,s_N)$ to express a stabilizer $S$ with $s_i=1$ when $M_i=X$, $s_i=-1$ when $M_i=Z$, and $s_i=0$ when $M_i=\mathbb{I}$.
Before showing the construction of Bell inequalities, let us first define a relationship between two stabilizers. 

\begin{definition}
Two stabilizers $S^1=\bigotimes_{i}M_i^1$, $S^2=\bigotimes_{i}M_i^2$ are called \emph{pairable}, if there exists at least one position $i$ such that local operators are anti-commutative, that is, $\exists~i$,  $M_i^1=Z, M_i^2=X$ or $M_i^1=X, M_i^2=Z$ $(s^1_is^2_i=-1)$.
\end{definition}
Note that \emph{pairable} stabilizers could have more than one anti-commutative position. 
Thus this definition is different with anti-commutative stabilizers. 
Hereafter, we use subscript $i$ to represent $i$th party and use superscript to represent different stabilizers. 
Taking advantage of the property of \emph{pairable} stabilizers of graph state  $\psi_{\mathcal{G}}$ to construct Bell inequalities,
we choose two \emph{pairable} stabilizers $S^1$ and $S^2$, and one anti-commutative position $T$. For position $i=T$, we replace the Pauli operators $X_i$ and $Z_i$ in the stabilizers $S^1$ and $S^2$ by observables $A_i+B_i$ and $A_i-B_i$, respectively. As for all the other positions $i \neq T$, we replace $X_i$ and $Z_i$ by $A_i$ and $B_i$, respectively. Consequently, the Bell inequality based on \emph{pairable} stabilizers $S^1$ and $S^2$ shows as follows.
\begin{lemma}\label{Th:two pair}
\begin{equation}
 \mathcal{B}_{1,2}=\sum_{j=1,2}\left \langle(A_i+s_i^jB_i)_{i=T}\prod_{i \ne T}P_i({s_i^j})\right \rangle
 \le \beta_c
\end{equation}
where $P_i(0)=\mathbb{I}$, $P_i(1)=A_i$, $P_i({-1})=B_i$, $A_i$, $B_i$ are all binary observables. The classical bound for this Bell inequality is $\beta_c= 2$ and the quantum bound (the maximal quantum value) $\beta_Q= 2\sqrt{2}$ which can be reached by the graph state $\psi_{\mathcal{G}}$.
\end{lemma}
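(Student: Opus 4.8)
The plan is to recognize $\mathcal{B}_{1,2}$ as a CHSH-type expression embedded in the multipartite setting, and then apply the standard Tsirelson argument at the operator level. First I would fix notation. Since $S^1$ and $S^2$ are \emph{pairable} with chosen anti-commuting position $T$, at site $T$ one of them carries $X$ and the other $Z$, so without loss of generality $s^1_T = 1$ and $s^2_T = -1$. Setting $\mathcal{O}^j := \prod_{i\ne T} P_i(s^j_i)$ for $j=1,2$ — each a tensor product over distinct parties of single-site terms $A_i$, $B_i$, or $\mathbb{I}$, hence a binary observable with $(\mathcal{O}^j)^2=\mathbb{I}$ — the Bell operator becomes $\mathcal{B}=A_T(\mathcal{O}^1+\mathcal{O}^2)+B_T(\mathcal{O}^1-\mathcal{O}^2)$, where $A_T,B_T$ commute with $\mathcal{O}^1,\mathcal{O}^2$ because they act on disjoint tensor factors.

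For the classical bound, in any local hidden-variable model every observable $A_i,B_i$ takes a definite value in $\{\pm1\}$, hence so do $\mathcal{O}^1$ and $\mathcal{O}^2$. If the two values coincide, the $A_T$ term contributes $\pm2$ and the $B_T$ term vanishes; if they are opposite, the roles swap. Thus every deterministic assignment gives $|\mathcal{B}_{1,2}|\le 2$, so $\beta_c=2$, and this is attained (e.g. assign $+1$ everywhere).

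For the quantum bound, I would treat $\mathcal{B}$ as a Hermitian operator and compute $\mathcal{B}^2$. Using $A_T^2=B_T^2=(\mathcal{O}^j)^2=\mathbb{I}$ and cross-factor commutativity, the diagonal part gives $(\mathcal{O}^1+\mathcal{O}^2)^2+(\mathcal{O}^1-\mathcal{O}^2)^2=4\mathbb{I}$, while the cross part is controlled by $\{\mathcal{O}^1+\mathcal{O}^2,\ \mathcal{O}^1-\mathcal{O}^2\}=0$ and $[\mathcal{O}^1+\mathcal{O}^2,\ \mathcal{O}^1-\mathcal{O}^2]=-2[\mathcal{O}^1,\mathcal{O}^2]$, yielding $\mathcal{B}^2 = 4\mathbb{I}-[A_T,B_T]\,[\mathcal{O}^1,\mathcal{O}^2]$. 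Since $A_T,B_T$ are self-adjoint unitaries, $\|[A_T,B_T]\|\le2$, and likewise $\|[\mathcal{O}^1,\mathcal{O}^2]\|\le2$; as these commutators live on disjoint factors, $\|[A_T,B_T][\mathcal{O}^1,\mathcal{O}^2]\|\le4$, hence $\|\mathcal{B}\|^2=\|\mathcal{B}^2\|\le8$, i.e. $\langle\mathcal{B}\rangle\le 2\sqrt2$ for every state. To see the graph state saturates this, choose $A_i=X_i$, $B_i=Z_i$ for $i\ne T$, and $A_T=(X_T+Z_T)/\sqrt2$, $B_T=(X_T-Z_T)/\sqrt2$ (binary observables since $\{X_T,Z_T\}=0$). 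Then $A_T+B_T=\sqrt2\,X_T$, $A_T-B_T=\sqrt2\,Z_T$, and $\mathcal{O}^j$ reduces to the Pauli part of $S^j$ on the parties $\ne T$, so $\mathcal{B}=\sqrt2(S^1+S^2)$; since $S^1,S^2$ stabilize $\psi_{\mathcal{G}}$, $\langle\mathcal{B}\rangle_{\psi_{\mathcal{G}}}=2\sqrt2=\beta_Q$.

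The one step requiring care — the main obstacle, such as it is — is the operator algebra for $\mathcal{B}^2$: because $\mathcal{O}^1$ and $\mathcal{O}^2$ need not commute with each other, one must track the ordering to confirm that all terms collapse precisely to $4\mathbb{I}-[A_T,B_T][\mathcal{O}^1,\mathcal{O}^2]$; once that identity is in hand, the rest is the textbook Tsirelson bound applied site-wise, together with the stabilizer property $S^j\ket{\psi_{\mathcal{G}}}=\ket{\psi_{\mathcal{G}}}$.
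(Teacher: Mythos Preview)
Your argument is correct. The classical bound is handled identically to the paper, and your Tsirelson-style computation of $\mathcal{B}^2=4\mathbb{I}-[A_T,B_T][\mathcal{O}^1,\mathcal{O}^2]$ together with the norm estimate is a valid route to $\beta_Q=2\sqrt{2}$; the saturation step with the graph state and the rotated observables at site $T$ matches the paper exactly.

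The paper, however, reaches the quantum bound by a different mechanism: a sum-of-squares decomposition
\[
2\sqrt{2}-\mathcal{B}_{1,2}
=\frac{1}{\sqrt{2}}\left[\mathbb{I}-\frac{A_T+B_T}{\sqrt{2}}\,\mathcal{O}^1\right]^2
+\frac{1}{\sqrt{2}}\left[\mathbb{I}-\frac{A_T-B_T}{\sqrt{2}}\,\mathcal{O}^2\right]^2 \ge 0.
\]
Both arguments certify the same bound, but the SOS form is not merely cosmetic here: when the inequality is saturated, each square must annihilate the state, which immediately yields the operator relations $\tilde{X}_T\ket{\psi}=\mathcal{O}^1\ket{\psi}$ and $\tilde{Z}_T\ket{\psi}=\mathcal{O}^2\ket{\psi}$ that the paper later exploits as the starting point of the self-testing proof (Theorem~\ref{Th:self-testing}). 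Your $\mathcal{B}^2$ approach gives the bound cleanly and with the familiar CHSH machinery, but it does not directly hand you those state-level stabilizer equations; extracting them from the saturation of the Tsirelson identity would require an extra step. So for the lemma in isolation your route is perfectly fine, while the paper's SOS is chosen with an eye toward what comes next.
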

The quantum bound is fulfilled by taking  $A_i=\frac{X+Z}{\sqrt{2}}, B_i=\frac{X-Z}{\sqrt{2}}$ when $i=T$, and $A_i=X, B_i=Z$ when $i \ne T$. 
Intuitively, we can choose a lot of $\emph{pairable}$ stabilizers to construct Bell inequalities. However, sometimes different pairs could have different anti-commutative positions which leads to inconsistent measurement settings. In order to avoid this inconsistency, we define a vertex subset $\mathcal{AC}\subset V$ to decide the observables of some specific positions, such as $\mathcal{AC}=\{T\}$ in  $\mathcal{B}_{1,2}$. We want to combine a lot of $\emph{pairable}$ stabilizers, and define the set $\mathcal{P}$ containing all the chosen pairs $S^l$ and $S^k$ as $(l, k)\in \mathcal{P}$. Note that for different pairs, they can share the same stabilizer, for example we allow $(1,2),(1,3)\in \mathcal{P}$.  Besides the $\emph{pairable}$ stabilizers, one can also add other non-pairable stabilizers $S^r$, $r\in \mathcal{R} $. 
We define the set containing all the stabilizers appearing in $\mathcal{P}$ and $\mathcal{R}$ as $\mathcal{ST}$ .
For given $\mathcal{P}$, $\mathcal{R}$, we put some requirements on $\mathcal{AC}$ as follows.
\begin{enumerate}
    \item For every $\emph{pairable}$ stabilizers, $S^l$ and $S^k$ , $\{l, k\}\in \mathcal{P} $, there exists only one position $T_{l,k}\in \mathcal{AC}$ such that the measurement setting of stabilizers, $S^l$ and $S^k$ in this position are anti-commutative.
    For other positions $i\in \mathcal{AC} \verb|\| \{T_{l,k}\}$, the measurement settings in $S^l$ and $S^k$ are all $\mathbb{I}$, $s^l_i=s^k_i=0$.
    \item For every stabilizer $S^r (r\in \mathcal{R}) $, the measurements in any position $i\in \mathcal{AC}$ are $\mathbb{I}$, $s^r_i=0$.  
\end{enumerate}

These requirements are used to prevent the situation where the maximal classical value of Bell correlation exceeds its maximal quantum value. 
Though this is not strictly forbidden, it would reduce the ratio of quantum and classical bounds in the constructed Bell inequalities.
Luckily, one can always find a suitable $\mathcal{AC}$ for the chosen stabilizer set $\mathcal{ST}$ by the following lemma.

\begin{lemma}\label{Le:suitable}
For any given stabilizer set $\mathcal{ST}$ containing \emph{pairable} stabilizers, one can always assign the stabilizers of $\mathcal{ST}$ into the paring set $\mathcal{P}$ and the remaining subset $\mathcal{R}$, and then find a suitable $\mathcal{AC}$ satisfying two requirements listed above.
\end{lemma}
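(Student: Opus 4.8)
The plan is to prove the lemma by an explicit construction that uses a single-element set $\mathcal{AC}$. Since $\mathcal{ST}$ contains \emph{pairable} stabilizers by hypothesis, fix one \emph{pairable} pair $S^a,S^b\in\mathcal{ST}$ and fix one position $T$ at which they are anti-commutative, so that one of them carries $X$ at $T$ and the other carries $Z$ at $T$ (that is, $s^a_T,s^b_T\in\{1,-1\}$ with $s^a_T s^b_T=-1$). I will take $\mathcal{AC}=\{T\}$. Partition $\mathcal{ST}$ by the local operator at position $T$: let $\mathcal{X}_T$ be the stabilizers with $M_T=X$, let $\mathcal{Z}_T$ be those with $M_T=Z$, and put every stabilizer with $M_T=\mathbb{I}$ into $\mathcal{R}$. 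By the choice of $T$, both $\mathcal{X}_T$ and $\mathcal{Z}_T$ are nonempty, since $S^a$ and $S^b$ lie in them.

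Next I pair up all of $\mathcal{X}_T\cup\mathcal{Z}_T$, taking $T$ as the designated anti-commutative position for every pair and exploiting the fact (permitted in the setup) that a stabilizer may appear in several pairs. Concretely: pick a fixed $S_0\in\mathcal{Z}_T$ and put $\{S,S_0\}\in\mathcal{P}$ for every $S\in\mathcal{X}_T$; then pick a fixed $S_0'\in\mathcal{X}_T$ and put $\{S,S_0'\}\in\mathcal{P}$ for every $S\in\mathcal{Z}_T$ that does not yet appear in any pair. This covers every stabilizer of $\mathcal{X}_T\cup\mathcal{Z}_T$, and each pair in $\mathcal{P}$ consists of one element of $\mathcal{X}_T$ and one of $\mathcal{Z}_T$; such a pair is \emph{pairable} with $T$ an anti-commutative position, so we set $T_{l,k}=T$ for every $\{l,k\}\in\mathcal{P}$.

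It then remains to check the two requirements against $\mathcal{AC}=\{T\}$. For requirement~1, the only element of $\mathcal{AC}$ is $T$; for any $\{l,k\}\in\mathcal{P}$ the two stabilizers are anti-commutative at $T$ by construction, so $T$ is the unique position of $\mathcal{AC}$ where this happens, and $\mathcal{AC}\setminus\{T_{l,k}\}=\emptyset$ makes the remaining condition vacuous. For requirement~2, every $S^r\in\mathcal{R}$ has $M_T=\mathbb{I}$, i.e. $s^r_i=0$ at the unique $i\in\mathcal{AC}$. Finally $\mathcal{P}$ is nonempty, so the associated Bell correlation is non-trivial, and the lemma follows.

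I expect the only genuine subtlety to be the potential imbalance between $|\mathcal{X}_T|$ and $|\mathcal{Z}_T|$: a perfect pairing would be impossible when these differ, and it is precisely the allowance of repeated stabilizers that repairs this. This is also why a single-element $\mathcal{AC}$ is the right level of generality: enlarging $\mathcal{AC}$ to two or more positions would, for a generic $\mathcal{ST}$, leave some stabilizer non-identity at two positions of $\mathcal{AC}$, which can then be placed neither in $\mathcal{R}$ nor in any pair, so larger choices of $\mathcal{AC}$ are available only for the specially structured stabilizer sets used in the concrete constructions that follow.
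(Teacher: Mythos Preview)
Your proof is correct and follows essentially the same approach as the paper: fix one anti-commutative position $T$ of a pairable pair, set $\mathcal{AC}=\{T\}$, split $\mathcal{ST}$ into the three classes $\mathcal{P}_+$, $\mathcal{P}_-$, $\mathcal{P}_0$ according to $s_T\in\{1,-1,0\}$ (your $\mathcal{X}_T,\mathcal{Z}_T,\mathcal{R}$), and pair across $\mathcal{P}_+\times\mathcal{P}_-$ with reuse to absorb any size imbalance. Your write-up is in fact slightly more explicit than the paper's in specifying the pairing and in verifying the two requirements.
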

Thus it is generally feasible to construct Bell inequalities from a stabilizer set $\mathcal{ST}$.
In order to make this construction clearer, we show an example of 4-qubit cluster state in FIG.~\ref{clusterex}. 
There are also other constructions for this given stabilizers set $\mathcal{ST}$, here we only give two constructions as examples.

\begin{figure}[htbp]
\centering
\includegraphics[width=0.35\textwidth]{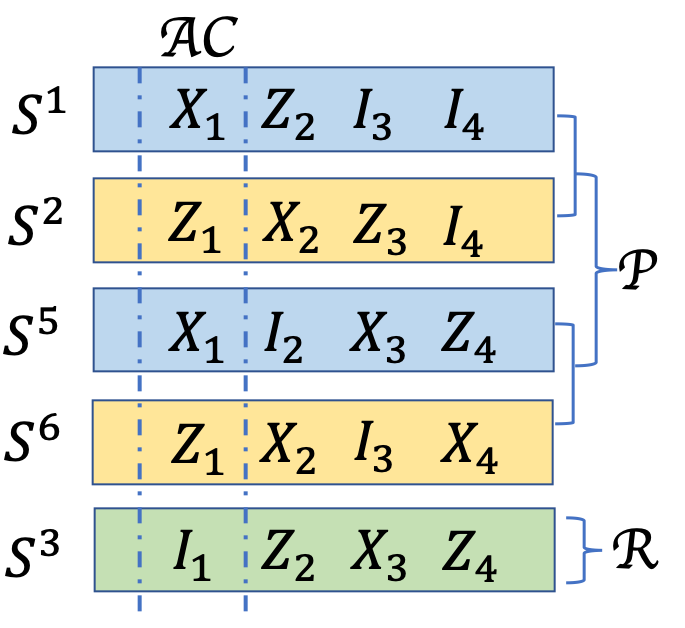}
\caption{An example of $\mathcal{P}, \mathcal{R}$ with a suitable $\mathcal{AC}$ for the 4-party cluster state.  For a given $\mathcal{ST}=\{1,2,3,5,6\}$ where $S^1=G_1=X_1Z_2$,$S^2=G_2=Z_1X_2Z_3$, $S^3=G_3=Z_2X_3Z_4$,$S^5=G_1G_3$, $S^6=G_2G_4$, we can assign $\mathcal{P}=\{(1,2),(5,6)\}$, $\mathcal{R}=\{3\}$ with $\mathcal{AC}=\{1\}$.
We can also assign $\mathcal{P}=\{(2,3),(2,5)\}$, $\mathcal{R}=\{1,6\}$ with $\mathcal{AC}=\{3\}$.
}\label{clusterex}
\end{figure}

For $\mathcal{P}$, $\mathcal{R}$ with a suitable $\mathcal{AC}$ satisfying the above two requirements, we construct the Bell inequalities for general graph states as follows.   
\begin{theorem}\label{Th:construction}
\begin{equation}
  \begin{aligned}
  &\sum_{(l,k)\in \mathcal{P}}\mathcal{B}_{l,k}+ \sum_{r\in \mathcal{R}} \mathcal{B}_{r}\le \beta_c=2|\mathcal{P}|+ |\mathcal{R}|,\\
&\mathcal{B}_{l,k}=\sum_{j=l,k}\left \langle\prod_{i\in \mathcal{AC}}(A_i+s^j_iB_i)\prod_{i\notin \mathcal{AC}}P_i({s_i^j})\right \rangle,\\
&\mathcal{B}_{r}=\left \langle\prod_{i\notin \mathcal{AC}}P_i({s_i^r})\right \rangle,
\end{aligned}
\end{equation}
where $P_i({s_i^j})$ are defined in Lemma~\ref{Th:two pair}, $|\mathcal{P}|$ and $|\mathcal{R}|$ denote the number of elements in sets $\mathcal{P}$ and $\mathcal{R}$, respectively. The quantum bound $\beta_Q=2\sqrt{2}|\mathcal{P}|+ |\mathcal{R}|$ and the corresponding graph state $\psi_{\mathcal{G}}$ can reach this maximal quantum value.
\end{theorem}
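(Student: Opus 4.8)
The plan is to break the left-hand side into blocks that are already understood and then recombine. First I would use the two requirements on $\mathcal{AC}$ to simplify each block. For $(l,k)\in\mathcal{P}$, requirement~1 gives $s^l_i=s^k_i=0$ for every $i\in\mathcal{AC}\setminus\{T_{l,k}\}$, so the factors at those sites are trivial and $\mathcal{B}_{l,k}$ is exactly the Bell expression of Lemma~\ref{Th:two pair} built on the single anti-commuting site $T_{l,k}$, i.e.\ $\mathcal{B}_{l,k}=\sum_{j=l,k}\langle(A_{T_{l,k}}+s^{j}_{T_{l,k}}B_{T_{l,k}})\prod_{i\neq T_{l,k}}P_i(s^j_i)\rangle$. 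Likewise requirement~2 forces $s^r_i=0$ for all $i\in\mathcal{AC}$, so $\mathcal{B}_r=\langle\bigotimes_{i}P_i(s^r_i)\rangle$ is a single correlator of a tensor product of local observables which, on the graph state with the measurements chosen below, will just be the stabilizer $S^r$. Thus the inequality splits into $|\mathcal{P}|$ Lemma~\ref{Th:two pair}-type blocks and $|\mathcal{R}|$ single correlators, all written in one fixed pair of settings per party.

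For the classical bound I would restrict to deterministic local strategies, where each $A_i,B_i\in\{\pm1\}$, the general case following since any local hidden variable value is a convex average of deterministic ones. Writing $p_j=\prod_{i\neq T_{l,k}}P_i(s^j_i)\in\{\pm1\}$ and taking $s^l_{T_{l,k}}=1=-s^k_{T_{l,k}}$ without loss of generality, one gets $\mathcal{B}_{l,k}=A_{T_{l,k}}(p_l+p_k)+B_{T_{l,k}}(p_l-p_k)$, which is bounded by $2$ in modulus since $(p_l+p_k,\,p_l-p_k)$ is $(\pm2,0)$ or $(0,\pm2)$; trivially $|\mathcal{B}_r|\le1$. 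Summing gives $\le 2|\mathcal{P}|+|\mathcal{R}|$, and the assignment $A_i=B_i=+1$ for all $i$ saturates every block simultaneously, so $\beta_c=2|\mathcal{P}|+|\mathcal{R}|$ is exact.

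For the quantum bound the upper estimate goes block by block: after the reduction each $\hat{\mathcal{B}}_{l,k}$ is the Lemma~\ref{Th:two pair} Bell operator, whose maximal quantum value is $2\sqrt2$ --- equivalently, with $R_j=\prod_{i\neq T_{l,k}}P_i(s^j_i)$ Hermitian, $R_j^2=\mathbb{I}$, and $R_j$ commuting with the party-$T_{l,k}$ observables, a short computation gives $\hat{\mathcal{B}}_{l,k}^{\,2}=4\,\mathbb{I}-[A_{T_{l,k}},B_{T_{l,k}}]\otimes[R_l,R_k]$, so $\|\hat{\mathcal{B}}_{l,k}\|\le2\sqrt2$ --- while $|\langle\mathcal{B}_r\rangle|\le1$ always; by linearity the quantum value of the sum is at most $2\sqrt2|\mathcal{P}|+|\mathcal{R}|$. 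That this is attained is shown by applying the Lemma~\ref{Th:two pair} measurements globally: $A_i=\tfrac{X_i+Z_i}{\sqrt2},\;B_i=\tfrac{X_i-Z_i}{\sqrt2}$ for $i\in\mathcal{AC}$, and $A_i=X_i,\;B_i=Z_i$ for $i\notin\mathcal{AC}$. This is a single consistent choice: by requirement~1 each $i\in\mathcal{AC}$ is the unique anti-commuting site for the pairs that use it and carries $\mathbb{I}$ in $S^l,S^k$ for all other pairs, and by requirement~2 it carries $\mathbb{I}$ in every $S^r$, so no party is assigned conflicting settings. Then $(A_i+s^j_iB_i)_{i=T_{l,k}}$ becomes $\sqrt2\,X_{T_{l,k}}$ or $\sqrt2\,Z_{T_{l,k}}$, so the $j$-th summand of $\mathcal{B}_{l,k}$ is the operator $\sqrt2\,S^j$ and $\bigotimes_iP_i(s^r_i)$ is $S^r$; using $G_i\ket{\psi_{\mathcal{G}}}=\ket{\psi_{\mathcal{G}}}$ (hence $S\ket{\psi_{\mathcal{G}}}=\ket{\psi_{\mathcal{G}}}$ for every stabilizer $S$), we get $\langle\mathcal{B}_{l,k}\rangle_{\psi_{\mathcal{G}}}=2\sqrt2$ and $\langle\mathcal{B}_r\rangle_{\psi_{\mathcal{G}}}=1$, i.e.\ total value $2\sqrt2|\mathcal{P}|+|\mathcal{R}|=\beta_Q$.

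The real work is the bookkeeping in the first and last steps: one must check from requirements~1 and~2 that the global measurement assignment is well defined and that every block genuinely collapses to a CHSH-type operator or to a plain stabilizer correlator, so that the norm estimate and the substitutions ``$j$-th term of $\mathcal{B}_{l,k}\mapsto\sqrt2\,S^j$'' and ``$\mathcal{B}_r\mapsto S^r$'' on $\psi_{\mathcal{G}}$ are valid. After that, the deterministic evaluation of $\mathcal{B}_{l,k}$, the identity for $\hat{\mathcal{B}}_{l,k}^{\,2}$, and the use of $G_i\ket{\psi_{\mathcal{G}}}=\ket{\psi_{\mathcal{G}}}$ are all routine.
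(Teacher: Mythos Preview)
Your proof is correct, and the overall block decomposition, the classical bound via deterministic assignments, and the attainment on $\psi_{\mathcal G}$ with the rotated/unrotated Pauli settings all coincide with the paper's argument. The one genuine methodological difference is how you establish the quantum \emph{upper} bound. The paper does it by a sum-of-squares identity, writing
\[
2\sqrt{2}\,\mathbb{I}-\hat{\mathcal B}_{l,k}
=\tfrac{1}{\sqrt2}\bigl[\mathbb{I}-\tfrac{1}{\sqrt2}Q_l\bigr]^2
+\tfrac{1}{\sqrt2}\bigl[\mathbb{I}-\tfrac{1}{\sqrt2}Q_k\bigr]^2,
\qquad
\mathbb{I}-\hat{\mathcal B}_{r}=\tfrac12\Bigl[\mathbb{I}-\textstyle\prod_{i}P_i(s^r_i)\Bigr]^2,
\]
with $Q_j$ the $j$-th summand of $\hat{\mathcal B}_{l,k}$, whereas you compute $\hat{\mathcal B}_{l,k}^{\,2}=4\,\mathbb{I}-[A_{T_{l,k}},B_{T_{l,k}}]\otimes[R_l,R_k]$ and bound the operator norm \`a la Tsirelson. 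Your route is a bit shorter for the bound itself; the paper's SOS buys more, because at the maximal violation each square must annihilate $\ket\psi$, which immediately gives the relations $\tfrac{1}{\sqrt2}Q_j\ket\psi=\ket\psi$ and $\prod_iP_i(s^r_i)\ket\psi=\ket\psi$ that are the starting point of the self-testing proof (Theorem~\ref{Th:self-testing}). So both arguments suffice for Theorem~\ref{Th:construction}, but the SOS decomposition does double duty downstream.
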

Note that novel CHSH-like multipartite Bell inequalities proposed in \cite{baccari2018scalable} can be seen as special cases of Theorem \ref{Th:construction} by  choosing all the stabilizers being generators in Eq.~\eqref{Eq:Stab}, that is, $\mathcal{ST}= \{S^i|S^i=G_i, i\in \{1,\dots,N\}\}$, where $G_i$ is the generator asscociated with $i$th vertex. In particular,
assuming the first vertex is the one with the largest number of neighbours, $|n_1|=\max_i |n_i|$,  the generators in $\mathcal{ST}$ are assigned into $\mathcal{P}=\{(1,j)|j\in n_1\}$, $\mathcal{R}=\{r|r\ne n_1\cup 1 \}$
with $\mathcal{AC}=\{1\}$. 


\emph{Sufficient and necessary condition for self-testing.}---
Besides ruling out the classical hidden variable model, Bell inequalities further provide us a method to verify quantum states in a device-independent manner. 
Though graph state $\psi_{\mathcal{G}}$ can reach the maximal quantum value of all Bell inequalities constructed from its stabilizers in Theorem \ref{Th:construction}, not all these Bell inequalities can verify $\psi_{\mathcal{G}}$ uniquely. 
Here in this section, based on the constructed Bell inequalities, we explore the sufficient and necessary condition for the self-testing of graph states. 
\begin{definition}
Suppose that the Bell inequality $\mathcal{B}_\mathcal{G}$, constructed from the stabilizers of $\psi_{\mathcal{G}}$, 
is maximally violated by a state $\psi$ and corresponding local observables $A_i$, $B_i$. If up to local isometries, $\psi$ and these local observables
are equivalent to the graph state $\psi_{\mathcal{G}}$, and $\frac{X_i+Z_i}{\sqrt{2}}, \frac{X_i-Z_i}{\sqrt{2}}$
when $i\in \mathcal{AC}$; $X_i, Z_i$ when $i\notin \mathcal{AC}$, we say this Bell inequality $\mathcal{B}_\mathcal{G}$ can self-test graph state $\psi_{\mathcal{G}}$. 
\end{definition}

In the conventional device-dependent verification of graph states, we have the following fact about determining graph state $\psi_{\mathcal{G}}$ with the trust of measurement devices.
\begin{fact}
\cite{gottesman1997stabilizer,Nielsen2011Quantum} For the device-dependent verification, a set of stabilizer measurements can uniquely determine the state $\psi_{\mathcal{G}}$, if and only if it contains $N$ independent stabilizers. A set a stabilizers are independent if they can not generate each other by multiplication, and note that any other stabilizers can be generated by $N$ independent stabilizers.

\end{fact}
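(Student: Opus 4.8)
The plan is to reduce the Fact to a one-line dimension count for the joint $+1$-eigenspace of a family of commuting Hermitian Pauli operators. First I would recall the standard structure: the full stabilizer group $\mathcal{S}=\langle G_1,\dots,G_N\rangle$ of $\psi_{\mathcal{G}}$, with the $G_i$ of Eq.~\eqref{Eq:Stab}, is an elementary abelian $2$-group of order $2^N$ with $-\mathbb{I}\notin\mathcal{S}$ (otherwise $-\psi_{\mathcal{G}}=\psi_{\mathcal{G}}$), and by Eq.~\eqref{Eq:Gsta} the state $\psi_{\mathcal{G}}$ is the unique common $+1$-eigenstate of every element of $\mathcal{S}$. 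Passing to the symplectic ($\mathbb{F}_2^N$) picture, $\mathcal{S}$ is identified with $\mathbb{F}_2^N$, and a set of stabilizers is \emph{independent} precisely when the corresponding vectors are $\mathbb{F}_2$-linearly independent; hence a set $\{S_1,\dots,S_m\}\subseteq\mathcal{S}$ ``contains $N$ independent stabilizers'' if and only if the subgroup $\mathcal{H}=\langle S_1,\dots,S_m\rangle$ has order $2^N$, i.e. $\mathcal{H}=\mathcal{S}$ (a spanning set of an $N$-dimensional space contains a basis, and a subgroup of order $2^N$ inside a group of order $2^N$ is the whole group). This also yields the parenthetical remark: a maximal independent subset of $\mathcal{S}$ generates a subgroup of order $2^N=|\mathcal{S}|$, hence all of $\mathcal{S}$, so every stabilizer is a product of $N$ independent ones.

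Next I would analyze the projector $\Pi=\prod_{j=1}^m\frac{\mathbb{I}+S_j}{2}$ onto the set of states consistent with the outcomes ``$S_j=+1$ for all $j$''. Because the $S_j$ commute and square to $\mathbb{I}$, $\Pi$ is a Hermitian projector and expands as $\Pi=\frac{1}{|\mathcal{H}|}\sum_{g\in\mathcal{H}}g$: every element of $\mathcal{H}$ appears exactly once, with no cancelling signs because $-\mathbb{I}\notin\mathcal{H}$. Taking the trace, every non-identity element of $\mathcal{H}$ is a non-identity Pauli string and hence traceless, while $\mathbb{I}$ is the only group element equal to $\pm\mathbb{I}$; therefore $\dim(\operatorname{range}\Pi)=\Tr\Pi=2^N/|\mathcal{H}|$.

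The equivalence then follows immediately. If $\{S_1,\dots,S_m\}$ contains $N$ independent stabilizers, then $|\mathcal{H}|=2^N$, so $\Tr\Pi=1$; the joint $+1$-eigenspace is one-dimensional and contains $\psi_{\mathcal{G}}$, so $\psi_{\mathcal{G}}$ is the only state satisfying all the measurement constraints, i.e. they uniquely determine it. Conversely, if the set contains at most $N-1$ independent stabilizers, then $|\mathcal{H}|=2^k$ with $k\le N-1$, so $\Tr\Pi=2^{N-k}\ge 2$ and there is a state (indeed a whole subspace) orthogonal to $\psi_{\mathcal{G}}$ equally consistent with every $S_j=+1$; hence the measurements do not pin down $\psi_{\mathcal{G}}$.

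I do not expect a genuine obstacle here --- this is exactly the textbook argument of \cite{gottesman1997stabilizer,Nielsen2011Quantum}. The only points needing care are the bookkeeping in expanding $\Pi$ (that distinct products of the $S_j$ may coincide as group elements but never with opposite signs, which is where $-\mathbb{I}\notin\mathcal{H}$ is used) and the clean translation between the group-theoretic statement $\mathcal{H}=\mathcal{S}$ and the phrasing ``contains $N$ independent stabilizers'' via the $\mathbb{F}_2$-linear-algebra picture.
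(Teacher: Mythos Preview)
Your argument is correct and is exactly the standard textbook dimension count from \cite{gottesman1997stabilizer,Nielsen2011Quantum}: expand the projector $\Pi=\prod_j\frac{\mathbb{I}+S_j}{2}=\frac{1}{|\mathcal{H}|}\sum_{g\in\mathcal{H}}g$, take the trace, and read off $\dim(\operatorname{range}\Pi)=2^N/|\mathcal{H}|$. The paper itself does not supply a proof of this Fact at all---it is stated purely as a citation to the stabilizer-formalism literature---so there is nothing in the paper to compare your approach against; your write-up is a faithful reconstruction of the cited result.
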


\begin{theorem}\label{Th:self-testing}
The family of Bell inequalities proposed in Theorem \ref{Th:construction} can self-test the graph state $\psi_{\mathcal{G}}$, if and only if the stabilizers in $\mathcal{ST}$ together can determine the graph state $\psi_{\mathcal{G}}$. 
\end{theorem}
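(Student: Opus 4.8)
The plan is to prove the two directions separately, with the ``only if'' direction being essentially a contrapositive counting argument and the ``if'' direction requiring the real self-testing machinery.

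For the ``only if'' direction, suppose the stabilizers in $\mathcal{ST}$ do \emph{not} determine $\psi_{\mathcal{G}}$. By Fact 2, $\mathcal{ST}$ then generates a stabilizer subgroup of rank $m < N$, so there is a nontrivial subspace of dimension $2^{N-m} \ge 2$ stabilized by all of $\mathcal{ST}$. I would exhibit an explicit alternative quantum realization achieving $\beta_Q$: take as the ``device'' state any other state $\psi' \ne \psi_{\mathcal{G}}$ inside this common eigenspace (or a suitable mixture/purification on an ancilla), together with the ideal observables $\frac{X_i\pm Z_i}{\sqrt 2}$ on $\mathcal{AC}$ and $X_i$, $Z_i$ elsewhere. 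Since every term $\mathcal{B}_{l,k}$ and $\mathcal{B}_r$ is built from operators that act as the corresponding stabilizer (times the CHSH-rotated pair on the single position $T_{l,k}$), and each such stabilizer fixes $\psi'$, the same computation that gives $\beta_Q$ for $\psi_{\mathcal{G}}$ gives $\beta_Q$ for $\psi'$. Because $\psi'$ is not local-isometric to $\psi_{\mathcal{G}}$ (they are orthogonal stabilizer states, distinguished by some stabilizer outside $\langle\mathcal{ST}\rangle$), self-testing fails. This shows determination by $\mathcal{ST}$ is necessary.

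For the ``if'' direction, assume $\mathcal{ST}$ contains $N$ independent stabilizers; I want to show maximal violation forces the ideal state and measurements up to local isometry. The standard route: maximal violation of each embedded CHSH-type block $\mathcal{B}_{l,k}$ is achieved only if, on the anti-commuting position $T_{l,k}$, the sum-of-squares (SOS) decomposition of $\beta_Q\mathbb{I} - \mathcal{B}_{l,k}$ is annihilated by the state, which yields the usual relations: the observables $A_{T}, B_{T}$ anticommute on the support of $\psi$, and for $i \ne T$ the correlators pin down the action of $A_i, B_i$. More precisely, I would argue that saturating the full sum $\sum \mathcal{B}_{l,k} + \sum \mathcal{B}_r = \beta_Q$ forces simultaneous saturation of each block (since each block is individually bounded by its quantum value), hence $\psi$ is a $+1$ eigenstate of the ``dressed'' version of every stabilizer $S \in \mathcal{ST}$, where $X_i \mapsto A_i$, $Z_i \mapsto B_i$ (and the rotated combinations on $\mathcal{AC}$). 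Then one invokes the Jordan-lemma / swap-isometry argument (as in \cite{mckague2012robust,baccari2018scalable}): the anticommuting pairs $\{A_i,B_i\}$ on each qubit generate, under the isometry, genuine Pauli operators $X_i, Z_i$ acting on a qubit factor, and the $N$ independent dressed stabilizer relations transported through the isometry become the $N$ independent stabilizer relations $G$-generated by $\mathcal{ST}$, which by Fact 2 have $\psi_{\mathcal{G}}$ as their unique common eigenstate. Hence the isometry maps $\psi$ to $\psi_{\mathcal{G}}$ (tensored with junk), completing self-testing.

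The main obstacle is the ``if'' direction, specifically making rigorous the step that extracts a \emph{consistent} local qubit structure at every site from the collection of CHSH blocks: a given party $i$ may appear in many pairs $(l,k)$ with different partners, and in some blocks as an $\mathcal{AC}$-position and in others not, so one must check that the anticommutation relations and the dressed-stabilizer eigenvalue equations extracted from different blocks are mutually compatible and together generate enough structure to run a single global isometry $\Phi = \bigotimes_i \Phi_i$. Handling the $\mathcal{AC}$ positions (where the relevant observables are $A_i+B_i$ and $A_i-B_i$ rather than $A_i, B_i$ individually) and showing these still yield the rotated Paulis $\frac{X_i\pm Z_i}{\sqrt2}$ after the isometry is the delicate bookkeeping; I expect the requirements 1--2 imposed on $\mathcal{AC}$ in the construction are exactly what guarantees this consistency, so the proof should reduce to verifying that those requirements make the SOS certificates of the different blocks act on disjoint or compatibly-overlapping operator algebras.
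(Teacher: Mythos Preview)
Your proposal has genuine gaps in both directions.

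\textbf{``Only if'' direction.} The sentence ``$\psi'$ is not local-isometric to $\psi_{\mathcal{G}}$ (they are orthogonal stabilizer states\ldots)'' is the crux of your argument, and it is false. Any two orthogonal stabilizer states in the same code differ by a product of local Paulis; in the paper's notation one has exactly $\ket{\psi'_{\mathcal{G}}}=\prod_{i\in\mathcal{D}}Z_i\ket{\psi_{\mathcal{G}}}$, so as \emph{states} they are local-unitary equivalent and your pure-state counterexample does not obviously defeat self-testing. The paper therefore does something different: it takes the \emph{mixed} state $\rho=\tfrac12(\psi_{\mathcal{G}}+\psi'_{\mathcal{G}})$, which also attains $\beta_Q$, and rules out any local isometry to $\psi_{\mathcal{G}}$ by an entropic argument --- $S(A|B)_{\psi_{\mathcal{G}}}=-1$ for any single-qubit $B$, whereas $S(A|B)_\rho\ge 0$, and equating them would contradict the operational meaning of conditional entropy in quantum state merging. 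Your parenthetical ``(or a suitable mixture/purification on an ancilla)'' gestures in this direction but the actual obstruction is not orthogonality.

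\textbf{``If'' direction.} Your outline is in the right spirit, but you misidentify the hard step. The SOS certificates give anticommutation $\{\tilde X_i,\tilde Z_i\}\ket{\psi}=0$ \emph{only} at the $\mathcal{AC}$ position(s), since that is the only place where $A_i\pm B_i$ appear together; at all other sites the blocks give you eigenvalue relations but not anticommutation of $A_i$ and $B_i$. The paper's key move is: because $\mathcal{ST}$ is complete, every generator $G_i$ can be written as a product of elements of $\mathcal{ST}$, and repeated use of the saturated relations in Eq.~\eqref{Satu1} yields $\tilde X_i\ket{\psi}=\prod_{j\in n_i}\tilde Z_j\ket{\psi}$ for \emph{every} vertex $i$. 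One then \emph{propagates} the anticommutation from the $\mathcal{AC}$ vertex to its neighbours via these generator relations, and inductively through the connected graph to all vertices. Only after this propagation does one have anticommuting pairs everywhere, at which point the swap isometry of \cite{baccari2018scalable} applies. Your worry about ``compatibility across blocks'' is not the issue --- the requirements on $\mathcal{AC}$ already force consistency --- the issue is that anticommutation away from $\mathcal{AC}$ is not given for free and must be derived through the graph.
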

Via this theorem, we make a close connection between the device-independent and -dependent verifications. 
If the stabilizers can determine the graph state under the trust of measurement devices, then one can always transform these stabilizers to Bell inequalities and apply it to verify the state without trusting the devices under our framework. Thus any witnesses and state verification methods based on stabilizers could inspire the construction of Bell inequalities.



\emph{Applications.---}
Equipped with the above framework, we exhibit applications via constructing Bell inequalities with various advantages. Note that in this section we only explore Bell inequalities with the self-testing property. 

At first, we prefer to select Bell inequalities with a large ratio of quantum and classical bound $\beta_Q/\beta_C$, which is beneficial for the experimental violation under practical set-ups and can lead to good performance in cryptography tasks \cite{Miller14}. 
From Theorem ~\ref{Th:construction}, it is clear that the maximal $\beta_Q/\beta_C$ is $\sqrt{2}$ when $\mathcal{R}=\varnothing$ in our construction. 
Utilizing the properly designed multiplications of generators from the graph state, we can construct self-testing Bell inequalities with the maximal ratio for any graph state.
\begin{corollary}\label{Cor:constant}
For any $N$-party graph state $\psi_{\mathcal{G}}$, based on our framework one can construct Bell inequalities using $2N$ correlations  to reach the maximal 
$\beta_Q/\beta_C=\sqrt{2}$ with self-testing
property in the same time.
\end{corollary}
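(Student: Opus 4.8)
The plan is to exhibit, for an arbitrary graph $\mathcal{G}$ (connected and without isolated vertices; a disconnected $\mathcal{G}$ is treated component by component), one concrete instance of the construction of Theorem~\ref{Th:construction} with $\mathcal{R}=\varnothing$ and a stabilizer set $\mathcal{ST}$ of full rank $N$, after which the corollary follows by inspection. Indeed, by Theorem~\ref{Th:construction} any such instance has $\beta_c=2|\mathcal{P}|$ and $\beta_Q=2\sqrt2\,|\mathcal{P}|$, hence the maximal ratio $\beta_Q/\beta_c=\sqrt2$, and it uses $2|\mathcal{P}|$ correlations; and by Fact~1 a stabilizer set of rank $N$ uniquely determines $\psi_{\mathcal{G}}$, so Theorem~\ref{Th:self-testing} upgrades the maximal violation to self-testing. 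Thus everything reduces to the purely combinatorial problem of choosing $\mathcal{P}$ with $|\mathcal{P}|=N$, $\mathcal{R}=\varnothing$, a valid $\mathcal{AC}$, and stabilizers of full rank.

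The construction is organised around a single ``hub'' vertex $v$, taking $\mathcal{AC}=\{v\}$. I would first prove the auxiliary graph-theoretic statement: $\mathcal{G}$ admits a proper colouring together with a vertex $v$ that has a neighbour in every colour class other than its own $C(v)$. This follows from an optimal ($\chi$-)colouring by contradiction: if no vertex met all the other colours, then each vertex $u$ of the last class $C_\chi$ would miss some colour $m(u)<\chi$; recolouring every such $u$ to $m(u)$ is conflict-free, because $C_\chi$ is independent so all neighbours of $u$ keep their colours, none of which equals $m(u)$; this eliminates colour $\chi$, contradicting optimality. Fix such a colouring and hub $v$.

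Next I would populate $\mathcal{ST}$ with stabilizers that are products of generators inside a single colour class and that carry an $X$ or a $Z$ at position $v$. From $C(v)$: the $X_v$-stabilizers $G_v$ and $G_vG_i$ for $i\in C(v)\setminus\{v\}$. From each other class $C_k$: fixing some $j_k\in n_v\cap C_k$ (nonempty by the lemma), the $Z_v$-stabilizers $G_i$ for $i\in C_k\cap n_v$ and $G_iG_{j_k}$ for $i\in C_k\setminus n_v$. This is a list of $N$ stabilizers, each indeed supported within one colour class and carrying $X_v$ or $Z_v$; together they generate all $N$ generators $G_1,\dots,G_N$, so $\mathcal{ST}$ has rank $N$. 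I would then choose $N$ distinct \emph{pairable} pairs, each consisting of one $X_v$-stabilizer and one $Z_v$-stabilizer (so $v$ is an anti-commutative position), whose union is exactly this generating list -- a ``double broom'': pair a fixed $X_v$-stabilizer with every $Z_v$-stabilizer and every remaining $X_v$-stabilizer with a fixed $Z_v$-stabilizer, which covers all $N$ stabilizers in $N-1$ pairs, and enlarge to $N$ pairs using one further distinct pair built from a spare $X_v$- or $Z_v$-stabilizer (one always exists except in small degenerate cases, where $2N-2$ correlations already suffice). Since $\mathcal{AC}=\{v\}$ is a singleton, the first requirement on $\mathcal{AC}$ is automatic with $T_{l,k}=v$ for every pair and the second is vacuous because $\mathcal{R}=\varnothing$; alternatively Lemma~\ref{Le:suitable} guarantees a valid $\mathcal{AC}$.

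With this $(\mathcal{P},\mathcal{R},\mathcal{AC})$ in hand, Theorem~\ref{Th:construction} delivers a Bell inequality that uses $2N$ correlations and attains $\beta_Q/\beta_c=\sqrt2$, and since $\mathcal{ST}$ contains $N$ independent stabilizers Theorem~\ref{Th:self-testing} gives self-testing of $\psi_{\mathcal{G}}$, which is the corollary. The step I expect to be the crux is the hub lemma of the second paragraph: if no vertex meets every other colour class, then some class contributes no $Z_v$-stabilizer and $\mathcal{ST}$ cannot reach full rank while keeping $\mathcal{R}=\varnothing$, so the $\sqrt2$ ratio and self-testing cannot be obtained simultaneously by this route; once that is secured, the remaining work is routine bookkeeping -- checking that the listed stabilizers generate all $G_i$ and that $N$ distinct pairable pairs covering them exist (a short case analysis on the colour-class sizes).
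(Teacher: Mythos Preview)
Your approach is essentially identical to the paper's: both prove the same ``hub'' lemma (by recolouring an optimal colouring to eliminate a colour), build the same $N$ single-colour-class stabilizers so that each carries $X_v$ or $Z_v$, take $\mathcal{AC}=\{v\}$, and pair $X_v$-stabilizers against $Z_v$-stabilizers with repetition (the paper simply reuses one fixed stabilizer rather than your double broom). One small correction: the number of distinct correlations is $2|\mathcal{ST}|=2N$ (two terms per distinct stabilizer from expanding $A_v\pm B_v$), not $2|\mathcal{P}|$ as you write, so your step of enlarging from $N-1$ to $N$ pairs is unnecessary---the double broom already gives $2N$ correlations.
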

The proof and detailed constructions are shown in Appendix C. 
In Ref.~\cite{baccari2018scalable}, only asymptotic case (infinity large $N$) and some special states, for example, GHZ state can reach this ratio $\beta_Q/\beta_C=\sqrt{2}$.  
Due to flexible choices of stabilizers, instead of only using generators, our framework provides Bell inequalities with generally larger $\beta_Q/\beta_C$ than the constructions in \cite{baccari2018scalable}, also using linear number of correlations, which are efficient and scalable for practical demonstrations in multipartite system.

Secondly, we also expect that Bell inequalities can show a good performance in the robust self-testing task. In the robust self-testing, one would like to lower bound the state fidelity to the target graph state $\psi_{\mathcal{G}}$ (under local isometries), only on account of the Bell inequality value which deviates from the maximal quantum value.  
Benefiting from the flexibility of our construction, one can have many choices of Bell inequalities at hand. We give constructions of some typical examples, 3-qubit and 4-qubit GHZ states, and 4-qubit 1-D cluster state in Table \ref{T:3GHZ}, \ref{T:4GHZ} and \ref{T:cluster}, respectively, with different $\mathcal{AC},\mathcal{P}, \mathcal{R}$ and $\beta_Q/\beta_C$. 
Based on the method in \cite{baccari2018scalable,kaniewski2016analytic}, in FIG.~\ref{Fig:compare}, we numerically show the performance of them in robust self-testing, and also compare with the Mermin inequality which is widely used for self-testing GHZ state \cite{mermin1990extreme,kaniewski2016analytic}. In principle, the robustness analysis can be extended to more qubits and one can explore more possible Bell inequalities via our framework.
\begin{table}[ht]
\begin{tabular}{|cl|cccc|}
  
  \hline
  &$\mathrm{GHZ}_3$ ($\mathrm{Cluster}_3$) & $\mathcal{AC}$ & $\mathcal{P}$ & $\mathcal{R}$  & $\beta_Q/\beta_C$\\[1mm]
  \hline
 & $\mathbf{1}$  & $\{1\}$
                        & $\{(1,2)\}$
                             & $\{3\}$ & $(2\sqrt{2}+1)/3$  \\[1mm]
      \hline
    & $\mathbf{2}$    &$\{1\}$
                        & $\{(1,2), (2,4)\}$
                             & $\{3\}$
                                 & $(4\sqrt{2}+1)/5$  \\[1mm]                        
                             
  \hline
    & $\mathbf{3}$    &$\{1\}$
                        & $\{(1,2), (2,4)\}$
                             & $\varnothing$
                                 & $\sqrt{2}$  \\[1mm]
  \hline
    & 4 \cite{baccari2018scalable} & $\{2\}$
                        & $\{(1,2), (2,3)\}$
                            & $\varnothing$ & $\sqrt{2}$ \\[1mm]

  \hline
\end{tabular}
\caption{3-qubit GHZ (Cluster) state with with generators $G_1=X_1Z_2$, $G_2=Z_1X_2Z_3$ and $G_3=Z_2X_3$.
And we choose stabilizers as $S^1=G_1$, $S^2=G_2$, $S^3=G_3$ and $S^4=G_1G_3$. According to our framework, actually we can construct totally 12 Bell inequalities with self-testing. Here we only take few of them as examples. Our new constructions are shown in bold.}
\label{T:3GHZ}
\end{table}

\begin{table}[ht]
\begin{tabular}{|cl|cccc|}
  
  \hline
  &$\mathrm{GHZ}_4$ & $\mathcal{AC}$ & $\mathcal{P}$ & $\mathcal{R}$  & $\beta_Q/\beta_C$\\[1mm]
   \hline
    & $\mathbf{1}$   &$\{1\}$
                        & $\{(1,2)\}$
                             & $\{5,6\}$
                                 & $(\sqrt{2}+1)/2$   \\[1mm]

  \hline
    & $\mathbf{2}$ & $\{2\}$
                        & $\{(1,2)\}$
                            & $\{3,4\}$ & $(\sqrt{2}+1)/2$  \\[1mm]

 \hline
    & $\mathbf{3}$& $\{1\}$
                        & $\{(1,2), (1,3)\}$
                            & \{6\} & $(4\sqrt{2}+1)/5$  \\[1mm]
 \hline
 & 4 \cite{baccari2018scalable}  & $\{1\}$
                        & $\{(1,2), (1,3), (1,4)\}$
                             &  $\varnothing$ & $\sqrt{2}$  \\[1mm]
 \hline
\end{tabular}
\caption{4-qubit GHZ state with generators  
$G_1=X_1Z_2Z_3Z_4$, $G_2=Z_1X_2$,  $G_3=Z_1X_3$ and $G_4=Z_1X_4$. And we choose stabilizers as $S^1=G_1$, $S^2=G_2$, $S^3=G_3$, $S^4=G_4$, $S^5=G_2G_3$ and $S^6= G_2G_4$.  Our new constructions are shown in bold.}
\label{T:4GHZ}
\end{table}

\begin{table}[ht]
\begin{tabular}{|cl|cccc|}
  
  \hline
  &$\mathrm{Cluster}_4$ & $\mathcal{AC}$ & $\mathcal{P}$ & $\mathcal{R}$  & $\beta_Q/\beta_C$\\[1mm]
  \hline
 & $\mathbf{1}$ & $\{1\}$
                        & $\{(1,2)\}$
                             & $\{3,4\}$ & $(\sqrt{2}+1)/2$  \\[1mm]
 \hline
    & 2  \cite{baccari2018scalable}& $\{2\}$
                        & $\{(1,2), (2,3)\}$
                            & \{4\} & $(4\sqrt{2}+1)/5$  \\[1mm]

  \hline
    & $\mathbf{3}$   &$\{1\}$
                        & $\{(1,2), (5,6)\}$
                             & $\varnothing$
                                 & $\sqrt{2}$  \\[1mm]
 
  \hline
    & $\mathbf{4}$ & $\{2\}$
                        & $\{(1,2), (3,6)\}$
                            & $\varnothing$ & $\sqrt{2}$ \\[1mm]

  \hline
\end{tabular}
\caption{4-qubit 1-D cluster state with generators $G_1=X_1Z_2$, $G_2=Z_1X_2Z_3$, $G_3=Z_2X_3Z_4$, $G_4=Z_3X_4$. And we choose stabilizers as
$S^1=G_1$, $S^2=G_2$, $S^3=G_3$, $S^4=G_4$, $S^5=G_1G_3$, $S^6= G_2G_4$.  Our new constructions are shown in bold.}
\label{T:cluster}
\end{table}


\begin{figure*}[htbp]
\centering
\includegraphics[width=1\textwidth]{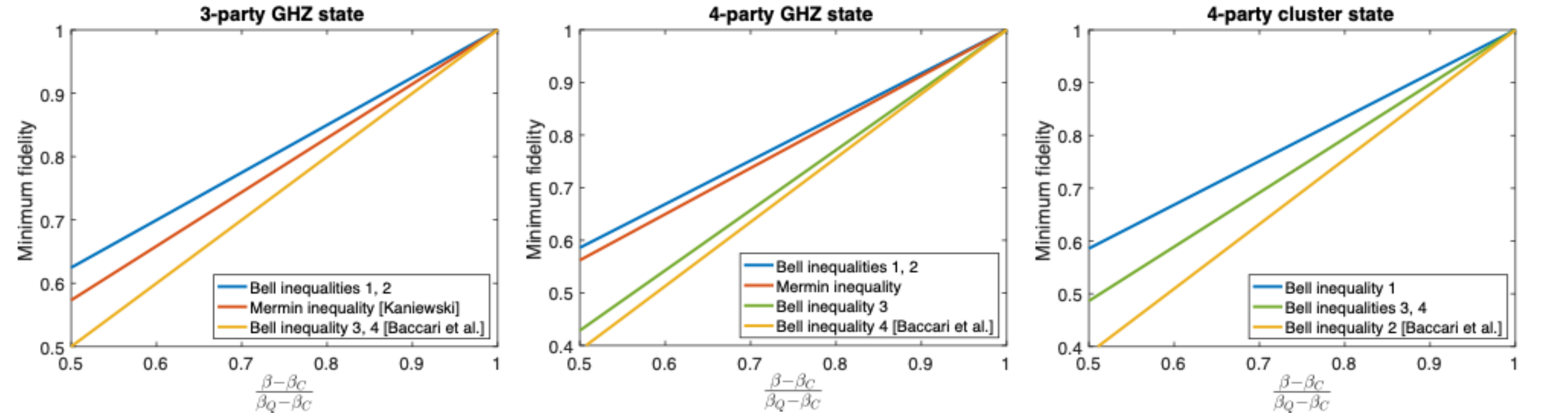}
\caption{Numerical estimation of the lower bound of fidelity to the target graph state versus normalized Bell inequality value $\frac{\beta-\beta_C}{\beta_Q-\beta_C}$. We compare different constructions of Bell inequalities from our framework with previous results. The best candidates for robust self-testing are \emph{Single pair} inequalities, which are the blue curves in the figure, where it shows the best known robustness self-testing performance, surpassing Mermin inequality in \cite{kaniewski2016analytic} 
and Bell inequalities proposed in  \cite{baccari2018scalable},
which are shown in red curve and yellow curve, respectively. 
}\label{Fig:compare}
\end{figure*}

From FIG.~\ref{Fig:compare}, one can find that larger $\beta_Q/\beta_C$ value does not necessarily lead to a better performance in the robust self-testing, which is characterized by the slope of the curves. 
The reason behind this phenomenon may be that 
the method of estimating fidelity is not tight or optimal. In our examples, we find that the best robust self-testing Bell inequality share the same property, that is, they are constructed from only one \emph{pairable} stabilizers, i.e., $|\mathcal{P}|=1$, and other stabilizers all in $\mathcal{R}$ (See Bell inequality 1 in Table \ref{T:3GHZ}, Bell inequalities 1 and 2 in Table \ref{T:4GHZ}, Bell inequality 1 in Table \ref{T:cluster}).  We name them as \emph{Single~pair} Bell inequalities,  \begin{equation}
 \mathcal{B}_{1,2}+\sum_{r\in \mathcal{R}}\mathcal{B}_{r}\le 2+|\mathcal{R}|.
\end{equation}

From these examples and the
realizable lower bounds of the fidelity, one can see that our new constructed \emph{Single~pair} Bell inequalities provide the best known robust self-testing bound, improving the previous self-test bound in \cite{kaniewski2016analytic,baccari2018scalable}. Note that compared to Mermin inequality with $2^N$ correlations, \emph{Single~pair} Bell inequalities only contains $N+1$ correlations, which is more efficient and scalable for large-scale system verification. This shows the potentiality of our framework.


As a side result of the robust self-testing bound, one can also construct device-independent genuine entanglement witness, by applying the linear self-testing fidelity bound $F \ge a\frac{\beta-\beta_C}{\beta_Q-\beta_C}+b $ with slope $a$ and intercept $b$,
\begin{corollary}
\begin{equation}
  \begin{aligned}
  &\sum_{(l,k)\in \mathcal{P}}\mathcal{B}_{l,k}+ \sum_{r\in \mathcal{R}} \mathcal{B}_{r}\stackrel{\text{bi-sep.}}{\le} \beta_{0.5},\\
  &\beta_{0.5}= \frac{(0.5-b)(\beta_Q-\beta_C)}{a}+\beta_C
\end{aligned}
\end{equation}
where $\mathcal{B}_{i}$ and $\mathcal{B}_{r}$ are Bell correlations shown in Theorem \ref{Th:construction}, $\beta_{0.5}$ is the threshold Bell value. The violation of this inequality implies the existence of genuine entanglement. \end{corollary}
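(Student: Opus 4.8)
\emph{Proof proposal.} The plan is to combine two ingredients: the device-independent linear fidelity bound underlying the robust self-testing of $\psi_\mathcal{G}$ (the same certification used to produce FIG.~\ref{Fig:compare}), and the standard device-dependent fact that no biseparable state overlaps a genuinely multipartite entangled graph state by more than $1/2$. First I would fix an arbitrary bipartition $A|\bar A$ of the $N$ parties and recall the Schmidt structure of graph states: for a connected graph $\mathcal{G}$, the reduced state $\mathrm{Tr}_{\bar A}\ket{\psi_\mathcal{G}}\bra{\psi_\mathcal{G}}$ is proportional to a projector onto a subspace of dimension at least $2$, so its largest eigenvalue $\lambda_{\max}$ satisfies $\lambda_{\max}\le 1/2$. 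A Cauchy–Schwarz estimate on the Schmidt decomposition then gives $|\braket{\psi_\mathcal{G}}{\phi_A\phi_{\bar A}}|^2\le\lambda_{\max}\le 1/2$ for every product state $\ket{\phi_A}\otimes\ket{\phi_{\bar A}}$. Since $\tau\mapsto\bra{\psi_\mathcal{G}}\tau\ket{\psi_\mathcal{G}}$ is \emph{linear} and the biseparable states form the convex hull of pure states that are product across some bipartition, this extends to $\bra{\psi_\mathcal{G}}\tau\ket{\psi_\mathcal{G}}\le 1/2$ for every biseparable $\tau$.

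Next I would invoke the robust self-testing statement attached to the Bell operator of Theorem~\ref{Th:construction}: whenever a physical state $\rho$ together with local observables attains the Bell value $\beta=\sum_{(l,k)\in\mathcal{P}}\mathcal{B}_{l,k}+\sum_{r\in\mathcal{R}}\mathcal{B}_r$, there exist local isometries $\Phi=\bigotimes_i\Phi_i$ such that the extracted state $\sigma=\mathrm{Tr}_{\mathrm{junk}}\,\Phi(\rho)$ obeys the linear bound $\bra{\psi_\mathcal{G}}\sigma\ket{\psi_\mathcal{G}}\ge a\,\frac{\beta-\beta_C}{\beta_Q-\beta_C}+b$. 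The crucial observation is that $\Phi$ acts locally on each party and therefore cannot generate entanglement across \emph{any} bipartition: if $\rho$ is biseparable with respect to $A|\bar A$ then so is $\Phi(\rho)$, and assigning each party's junk register to the side that party belongs to shows $\sigma$ is biseparable as well. Applying the overlap bound of the first step to $\sigma$ yields, for every biseparable $\rho$, the inequality $a\,\frac{\beta-\beta_C}{\beta_Q-\beta_C}+b\le\tfrac12$, which rearranges exactly into $\beta\le\beta_{0.5}=\frac{(0.5-b)(\beta_Q-\beta_C)}{a}+\beta_C$. Hence observing a value strictly above $\beta_{0.5}$ is incompatible with biseparability, certifying genuine multipartite entanglement; and because only the device-independent self-testing bound was used, the witness itself is device-independent.

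The main obstacle I expect is the careful bookkeeping around the isometry and the junk registers: one has to verify that these ancillary subsystems can be partitioned consistently with $A|\bar A$ so that biseparability is genuinely inherited by $\sigma$, and that the quantity appearing in the self-testing bound is the linear overlap $\bra{\psi_\mathcal{G}}\cdot\ket{\psi_\mathcal{G}}$ rather than the concave square-root fidelity, since the convexity argument for the $1/2$ threshold depends on that linearity. A secondary point is the range of validity of the certified line $ax+b$: it must be a legitimate lower bound for all achievable $\beta$ in the relevant regime, so one should note that $\beta_{0.5}<\beta_Q$ lies where the slope-$a$, intercept-$b$ certification holds, while for $\beta$ so small that no nontrivial fidelity is certified the conclusion $\beta\le\beta_{0.5}$ is automatic.
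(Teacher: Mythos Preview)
Your proposal is correct and follows exactly the paper's route: combine the linear robust self-testing bound $F\ge a\frac{\beta-\beta_C}{\beta_Q-\beta_C}+b$ with the fact that biseparable states have fidelity at most $1/2$ to a connected graph state, then read off the threshold $\beta_{0.5}$. In fact you supply considerably more detail than the paper, which simply cites \cite{Toth2005Detecting,zhou2019detecting} for the $1/2$ overlap bound and leaves the preservation of biseparability under local isometries implicit.
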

This corollary is due to the fact that the underlying state possesses genuine entanglement when fidelity with a certain graph state exceeds 0.5 \cite{Toth2005Detecting, zhou2019detecting}.
We give detailed genuine entanglement bounds $\beta_{0.5}$ in Appendix E for our constructed Bell inequalities. 

\emph{Conclusion}---
In summary, we propose a systematical framework to construct Bell inequalities directly from stabilizers, and further provide a one-to-one map from the device-dependent verification to the self-testing one. 

The framework can provides us a large number of Bell inequalities to select for different application scenarios, for instance, \emph{Single~pair} Bell inequality for robust self-testing.
Even though the fidelity lower bounds for robust self-test are obtained by numerics, 
these result are also instructive for obtaining an (tight) analytical bound for general graph states in the future \cite{kaniewski2016analytic}.
One may also find other interesting inequalities with other advantages. 
Similar to the entanglement witness, a modification of coefficients between different Bell expressions may be beneficial for improving the ratio of  quantum and classical bounds and the robust self-testing performance \cite{Lu2018Structure,zhou2019detecting,zhao2019efficient}.  

Via the proposed framework, we close the gap between device-independent and -dependent witness and verification, and borrow the experience from device-dependent study. This connection can inspire more complicated Bell inequalities constructions, for instance, the ones with more than two local measurement, applying $Y$ operators in stabilizers.  
It would be also interesting to extend the method to hypergraph state, non-stabilizer states, and high-dimensional entangled states \cite{Rossi2013hyper,Kraft2018Multilevel,Bavaresco2018high,You2019Sym}. 

Under the context of entanglement detection, like the entanglement witnesses shown above, one can also obtain device-independent entanglement (structure) witnesses \cite{liang15,Lu2018Structure,zhou2019detecting}, and we leave the details in the future work.  

\emph{Acknowledgement}
We are grateful to Xiao Yuan for useful discussions. 
QZ acknowledges the support by the Department of Defense through the Hartree Postdoctoral Fellowship at QuICS.
YZ was supported in part by the Templeton Religion Trust under grant TRT 0159.

%

\begin{appendix}

\section{Proof for Constructed Bell inequalities}

\emph{Proof for Lemma \ref{Th:two pair}}---. For any \emph{pairable} stabilizers $S^1$ and $S^2$, we choose one of its anti-commutative position as $T$, such that $s_T^1s^2_T=-1$. It is not hard to see that the classical value of $\mathcal{B}_{1,2}$ can not be greater than $\left \langle A_T+B_T\right \rangle+\left \langle A_T-B_T\right \rangle\le 2$.

Then we prove the quantum bound by expressing the difference as \begin{widetext}
     \begin{eqnarray}\label{Eq:maximal pair1}
      2\sqrt{2}- \mathcal{B}_{1,2}=\left \langle \frac{1}{\sqrt{2}}\left[\mathbb{I}-(\frac{A_T+ B_T}{\sqrt{2}})\prod_{i\ne T}P_k({s_i^1})\right]^2+ \frac{1}{\sqrt{2}}\left[\mathbb{I}-(\frac{A_T-B_T}{\sqrt{2}})\prod_{i\ne T}P_k({s_i^2})\right]^2 \right \rangle\ge 0.
     \end{eqnarray}
\end{widetext}
The quantum bound is reached by choosing  $A_T=\frac{X+Z}{\sqrt{2}}, B_T=\frac{X-Z}{\sqrt{2}}$
and $A_i=X, B_i=Z$ for $i\ne T$ with the underlying state is the graph state $\psi_{\mathcal{G}}$.

\emph{Proof for Lemma \ref{Le:suitable}}.---
We choose one pair of \emph{pairable} stabilizers $S^1$ and $S^2$, and choose one anti-commutative position, denoted as $T$. We divide all the stabilizers belonging to $\mathcal{ST}$ into three subsets according to the measurement setting on the $T$th position,
\begin{equation}
\begin{aligned}
    &\mathcal{P}_+=\{S|s_T=1,S\in \mathcal{ST}\},\\
    &\mathcal{P}_-=\{S|s_T=-1,S\in \mathcal{ST}\},\\
     &\mathcal{P}_0=\{S|s_T=0,S\in \mathcal{ST}\}.
\end{aligned}
\end{equation}
Any stabilizers $S^l\in \mathcal{P}_+$ and 
$S^k\in \mathcal{P}_-$ are \emph{pairable}. If $|\mathcal{P}_+|\ne |\mathcal{P}_-|$, we can reuse the stabilizers. In this way, all the stabilizers in  $\mathcal{P}_+\cup \mathcal{P}_-$ can be assigned into paring set $\mathcal{P}$ and also let $\mathcal{P}_0=\mathcal{R}$ and $\mathcal{AC}=T$. The above construction satisfies two listed requirements.

\emph{Proof for Theorem \ref{Th:two pair}.---}
First, the classical bound can be obtained by multiple uses of the result in Lemma 1, 
that is, $\sum_{(l,k)\in \mathcal{P}}(\mathcal{B}_{l}+\mathcal{B}_{k})\le 2|\mathcal{P}| $ and the fact $\sum_{r\in \mathcal{R}}\mathcal{B}_{r}\le |\mathcal{R}|$. 

Then we prove the quantum bound by showing that  
  \begin{equation}
      2\sqrt{2}|\mathcal{P}|+ |\mathcal{R}|- \sum_{(l,k)\in \mathcal{P}}(\mathcal{B}_{l}+\mathcal{B}_{k})- \sum_{r\in \mathcal{R}} \mathcal{B}_{r}\ge 0. 
  \end{equation}
Note that here in each $\mathcal{B}_{l}$, $\mathcal{B}_{k}$ and $\mathcal{B}_{r}$ all the observables $A$ and $B$ are not fixed measurement settings, they are arbitrary dichotomic observables. Similar as the proof of Lemma \ref{Th:two pair},
this inequality can be transformed into a sum of the following squares. 
\begin{widetext}
     \begin{eqnarray}\label{Eq:maximal pair}\label{Eq:Sq:P}
     2\sqrt{2}-\mathcal{B}_{l}-\mathcal{B}_{k}=\left\langle\frac{1}{\sqrt{2}}\left[\mathbb{I}-\prod_{i\in \mathcal{AC}}(A_i+s^l_iB_i)\prod_{i\notin \mathcal{AC}}P_i({s_i^l})\right]^2+ \frac{1}{\sqrt{2}}\left[\mathbb{I}-\prod_{i\in \mathcal{AC}}(A_i+s^k_iB_i)\prod_{i\notin \mathcal{AC}}P_i({s_i^k})\right]^2\right\rangle\ge 0.
     \end{eqnarray}
\end{widetext}
\begin{equation}\label{Eq:maximal single}
    \begin{aligned}
 1-\mathcal{B}_{r}=\left \langle\frac{1}{2}\left[\mathbb{I}-\prod_{i\notin \mathcal{AC}}P_k({s_i^j})\right]^2 \right\rangle\ge 0.
 \end{aligned}
\end{equation}
The quantum bound is reached by choosing  $A_i=\frac{X+Z}{\sqrt{2}}, B_i=\frac{X-Z}{\sqrt{2}}$
when $i\in \mathcal{AC}$. $A_i=X, A_i=Z$ when $i\notin \mathcal{AC}$ with the underlying state is the graph state $\psi_{\mathcal{G}}$.

\section{Proof for Self-testing}
 
\emph{Proof for Theorem \ref{Th:self-testing}}.---
At first, we prove the ``if " part. We assume the underlying state is $\psi$ and  observables are $O_i$ for the $i$th party, and prove that for all correlations in Bell inequality there exists an isometry $\Phi=\bigotimes_{i=1}^{N} \Phi_i$, with local isometries $\Phi_i$,
\begin{equation}
\Phi\left[ \prod_i O_i(s^j_i) \ket{\psi}\right]= \prod_{i}P_i({s_i^j}) ~\left(\ket{\psi_{\mathcal{G}}} \otimes \ket{extra}\right)
\end{equation}
where for $i\in \mathcal{AC}$, $P_i(0)=\mathbb{I}$, $P_i(1)=\frac{X+Z}{\sqrt{2}}$; $P_i({-1})=\frac{X-Z}{\sqrt{2}}$, for $i\notin \mathcal{AC}$,
$P_i(0)=\mathbb{I}, P_i(1)=X$, $P_i({-1})=Z$.

Without loss of generality, in the following we assume $\mathcal{AC}=\{1\}$ and the case where $\mathcal{AC}$ contains more than one position can be proved similarly. Denote 
\begin{equation}\label{Eq: defX1}
\tilde{X_1}=\frac{1}{c\sqrt{2}}\left[O_1(1)+O_1(-1)\right], \tilde{Z_1}=\frac{1}{c'\sqrt{2}}[O_1(1)-O_1(-1)],
\end{equation}
where $c, c'$ are normalized parameters such that $|\tilde{X_1}|=|\tilde{Z_1}|=1$. For $i\ne 1$, we denote 
$\tilde{X_i}=O_i(1)$ and $\tilde{Z_i}=O_i(-1)$ which are the actually implemented observables. 

A key step of the proof is to show that for each position $i$, when acting on the underlying state $\psi$, $\tilde{X_i}$ and $\tilde{Z_i}$ are anti-commutative, i.e.,
\begin{equation}\label{Eq: anti}
(\tilde{X_i}\tilde{Z_i}+\tilde{Z_i}\tilde{X_i})\ket{\psi}=0, ~~i=1,2\dots, N.
\end{equation}

When the Bell inequality reach the maximal value, the inequalities in Eq.~\eqref{Eq:Sq:P} and \eqref{Eq:maximal single} in the proof of Theorem 1 should be saturated when acting on the state $\psi$. As a result, according to each square in the equations, we have 
\begin{equation}
\begin{aligned}\label{Satu1}
\tilde{X_1}\ket{\psi}&= \prod_{i\ne 1}O_i({s_i^l}) \ket{\psi},\\
\tilde{Z_1}\ket{\psi}&= \prod_{i\ne 1}O_i({s_i^k}) \ket{\psi},\\
\ket{\psi}&= \prod_{i\ne 1}O_i({s_i^r}) \ket{\psi},\\
\end{aligned}
\end{equation}
where $(l,k)\in \mathcal{P}$ and  $r\in \mathcal{R}$.
For the position $i=1$, the anticommutive relationship can be obtained directly from  Eq.~\eqref{Eq: defX1},
\begin{equation}\label{Eq: anti1}
(\tilde{X_1}\tilde{Z_1}+\tilde{Z_1}\tilde{X_1})\ket{\psi}=0.
\end{equation}
Since the stabilizers in $\mathcal{ST}= \mathcal{P} \cup \mathcal{R}$ can determine the graph state $\psi_\mathcal{G}$, $\mathcal{ST}$ at least contains $N$ independent stabilizers. For all generators in Eq.~\eqref{Eq:Stab}, there always exist a series of stabilizers $S^{i_1},\dots,S^{i_m}$ from $\mathcal{ST}$ satisfying
\begin{equation}\label{Eq:Stabin}
G_i=\prod_{j=1}^{m} S^{i_j},
\end{equation} 
with $m\leq N$.
By multiple uses of Eq.~\eqref{Satu1} and plugging into Eq.~\eqref{Eq:Stabin}, we have 
\begin{equation}\label{Eq:Ap:stab}
\begin{aligned}
&\tilde{X_i}\ket{\psi}= \prod_{j\in n_i} \tilde{Z_j}\ket{\psi}.
\end{aligned}
\end{equation}
for all $1\leq i\leq N$. Thus for $j\in n_1$, by utilizing the relations of vertex $1$ and $j$ in Eq.~\eqref{Eq:Ap:stab}, we have
\begin{widetext}
     \begin{eqnarray}
(\tilde{Z_j}\tilde{X_j}+\tilde{X_j}\tilde{Z_j})\ket{\psi}= (\tilde{Z_1}\tilde{X_1}+\tilde{X_1}\tilde{Z_1})  \prod_{i\in \mathcal{C}(1,j) }Z_i \ket{\psi},
\end{eqnarray}
\end{widetext}
where $\mathcal{C}(1,j)=[n_1\cup n_j] \setminus [\{1,j\}\cup (n_1\cap n_j)]$ denotes the set of all the vertexes which are neighbors of either $1$ or $j$, but are not $1$ or $j$ themselves.
Starting from position 1, we can obtain the anti-commutative relationship for the positions $j\in n_1$,
\begin{equation}
(\tilde{X_j}\tilde{Z_j}+\tilde{Z_j}\tilde{X_j})\ket{\psi}=0, ~~j\in n_1.
\end{equation}

Then in the same way, we can to obtain the anti-commutative relationship for the operators whose corresponding vertex is the neighbor of the vertexes in $n(1)$. Because the graph is connected, we can iterate the above procedure and get the anti-commutative relationship for all parties. The construction of the isometry $\Phi$ is exactly the same with that in \cite{baccari2018scalable} and we do not repeat it here. 

Secondly, we prove the ``only if " part. We assume that there exists 
a stabilizer set $\mathcal{ST}$, 
which can not determine the graph state. That is, it can at most contain $N-1$ independent stabilizers denoted by $S^1, S^2,\cdots, S^{N-1}$, and we can always find one generator denoted by $S^N=G_N$ which can not be expressed by the product of stabilizers in $\mathcal{ST}$. Consequently, we construct a state
\begin{equation}
\begin{aligned}
\rho=\frac1{2}\prod_{i=1}^{N-1}\frac{S^i+\mathbb{I}}{2},
\end{aligned}
\end{equation}
which is the maximally mixed state in the two dimensional subspace, determined by $S^1, S^2,\cdots, S^{N-1}$ all taking the eigenvalue 1. As a result, $\rho$ has exactly the same value with $\psi_{\mathcal{G}}=\prod_{i=1}^N\frac{S^i+\mathbb{I}}{2}$ for Bell inequalities constructed from $\mathcal{ST}$. Note that $\rho$ is actually the mixture of $\psi_{\mathcal{G}}$ and another state,
\begin{equation}
\begin{aligned}
\rho&=\frac{(\mathbb{I}+G_N)/2+(\mathbb{I}-G_N)/2}{2}\prod_{i=1}^{N-1}\frac{S^i+\mathbb{I}}{2}\\
&=\frac1{2}(\psi_{\mathcal{G}}+\psi'_{\mathcal{G}}).
\end{aligned}
\end{equation}
Here the state $\psi'_{\mathcal{G}}$ is determined by $S_1, S_2,\cdots, S_{N-1}$ all taking the eigenvalue 1 and $G_N$ taking $-1$,
\begin{equation}\label{Eq:Ap:LU}
\begin{aligned}
\psi'_{\mathcal{G}}=\frac{(\mathbb{I}-G_N)}{2}\prod_{i=1}^{N-1}\frac{S_i+\mathbb{I}}{2}.
\end{aligned}
\end{equation}
Since $\{S_i\}_1^N$ is complete, one can uniquely determine whether $G_i\ket{\psi'_{\mathcal{G}}}=\ket{\psi'_{\mathcal{G}}}$ or $-\ket{\psi'_{\mathcal{G}}}$, by multiplying the results of them. And we denote the vertex subset $\mathcal{D}=\left\{i:G_i\ket{\psi'_{\mathcal{G}}}=-\ket{\psi'_{\mathcal{G}}}\right\}$. As a result, $\psi'_{\mathcal{G}}$ can also be transformed from $\psi_{\mathcal{G}}$ by local unitary,
\begin{equation}
\begin{aligned}
\ket{\psi'_{\mathcal{G}}}=\prod_{i\in \mathcal{D}}Z_i\ket{\psi_{\mathcal{G}}}.
\end{aligned}
\end{equation}

In the following, we show that $\rho$ can not be transformed to $\psi_{\mathcal{G}}$ by local isometries, which contradicts to the self-testing claim. Let us focus on any single qubit, say the 1st qubit, and take it as the subsystem $B$, and the remaining qubits as $A$. The quantum conditional entropy on $A$ of the state $\psi_{\mathcal{G}}$ is
\begin{equation}
\begin{aligned}
S(A|B)_{\psi_\mathcal{G}}=S(AB)_{\psi_\mathcal{G}}-S(B)_{\psi_\mathcal{G}}=-1\\
\end{aligned}
\end{equation}
where we use the fact that $\psi_\mathcal{G}$ is pure and the entanglement entropy of the first qubit is $1$ \cite{Hein2006Graph}. On the other hand, quantum conditional entropy of the state $\rho$ shows
\begin{equation}
\begin{aligned}
S(A|B)_{\rho}&=S(AB)_{\rho}-S(B)_{\rho}\\
&=1-S(B)_{\rho}\geq 0,
\end{aligned}
\end{equation}
where we apply the fact that $\rho$ is a maximally mixed state in the subspace, and the entropy of $B$ is upper bounded by the qubit number $1$. 

It is known that $S(A|B)$ quantifies how many qubits need to send from $A$ (Alice) to $B$ (Bob) to reconstruct $\rho_{AB}$ at Bob's side in the quantum state merging task \cite{Horodecki2009entanglement}. A negative value indicates that one does not need to send qubits, but can also \emph{gain} $-S(A|B)$ maximally entangled pairs. Considering the two states given before, suppose one can transform $\rho$ to $\psi_\mathcal{G}$ with local isometries, and we show this is contradict to the quantum state merging efficiency. First, transform $\rho$ to $\psi_\mathcal{G}$ with local isometries, and then one can finish the quantum state merging of $\psi_\mathcal{G}$ without any qubit sending but get one entangled pair. Finally, one can transform $\psi_\mathcal{G}$ back to $\rho$ with local operations according to Eq.~\eqref{Eq:Ap:LU}, which contradicts to $S(A|B)_{\rho}\geq 0$.

\section{Proof for Applications}

\emph{Proof for Corollary \ref{Cor:constant}.---}
We divide the $N$ vertices of the graph with chromatic number $K$ into $K$ disjointed subset $C_k$. 
According to Lemma \ref{Le:App} below, one can always find a vertex, without loss of generality, denoted as the vertex $1$ belonging to the first color subset, $1\in C_1$ satisfying that 
from every other color subset $C_k$ $(k\ne 1)$, there always exists at least one vertex, $v_k\in C_k$ such that vertex $1$ and $v_k$ are neighbors, i.e., $v_k\in n_1$.
We construct $K$ disjoint stabilizer sets as follows,
\begin{equation}
\begin{aligned}
    &\mathcal{P}_1=\{G_1,G_1G_j| j\in C_1\setminus \{1\} \},\\ &\mathcal{P}_k=\{G_i,G_kG_j| i\in C_k \cap n_1, j\in C_k\setminus n_1\}, k={2,3\dots,K}.\\
\end{aligned}
\end{equation}
where $G_i$ denotes the generator of the $i$th vertex. Here $\mathcal{P}_1$ contains $G_1$ and the multiplications of $G_1$ with the other generators from the first color set $C_1$. For $2\leq k \leq K$, $\mathcal{P}_k$ contains the generators of the neighbors of vertex $1$, and the multiplications of these generators with the other generators from the same color set $C_k$. For simplicity of the construction, we only consider the multiplication of $G_j$ with one of the generators from $n_1$, say $G_k$.

It is not hard to check that any pair of 
$S^i\in \mathcal{P}_1,S^j \in \mathcal{P}_k$ are \emph{pairable}. In particular, they are anti-commutative at the first position $i=1$.
We denote the stabilizers in $\mathcal{P}_1$ as $S^1,\dots S^{|C_1|}$ and other stabilizers in $\mathcal{P}_2,\dots,\mathcal{P}_K$ as $S^{|C_1|+1},\dots S^N$. We choose $\mathcal{AC}=\{1\}$ , and when $2|C_1|\le N $ we construct the corresponding Bell inequality by reusing stabilizer $S^1$ as follows,
\begin{equation}
\begin{aligned}
(N-2|C_1|+1)\mathcal{B}_1+ \sum_{j=2}^{N}\mathcal{B}_j\le \beta_C=2(N-|C_1|)
 \end{aligned}
\end{equation}
where $\mathcal{B}_{j}=\left \langle (A_1+s^j_1B_1)\prod_{i\ne 1}P_i({s_i^j})\right \rangle $. 
When $2|C_1|> N $, we reuse stabilizer $S^N$, and construct Bell inequality
\begin{equation}
\begin{aligned}
(2|C_1|-N+1)\mathcal{B}_N+ \sum_{j=1}^{N-1}\mathcal{B}_j\le \beta_C=2|C_1|
 \end{aligned}
\end{equation}
These two Bell inequality both has $2N$ correlations and the maximal quantum values are $\beta_Q=2\sqrt{2}(N-|C_1|)$, $\beta_Q=2\sqrt{2}|C_1|$, respectively. As the result, the ratio $\beta_Q/\beta_C=\sqrt{2}$ for both cases.

The stabilizers we used are $\mathcal{ST}=\bigcup_{k=1,2\dots,K} \mathcal{P}_k$  containing $N$ independent stabilizers. Via Theorem 2, we prove that the above inequality can self-test the graph state $\psi_{\mathcal{G}}$.

\begin{lemma}\label{Le:App}
For any graph $\mathcal{G}=(V,E)$ whose chromatic number is $K$ with disjoint color sets $C_1,\dots, C_K$, $V=\bigcup_{k=1}^K C_k$, there always exists a vertex, for example, $1\in C_1$ satisfying that from every other color subset, 
we could find at least one vertex $v_k\in C_k$ $k=2,\dots,K$ such that vertex $1$ and $v_k$ are neighbors, $v_k\in n_1$.
\end{lemma}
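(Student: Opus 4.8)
\emph{Proof proposal for Lemma \ref{Le:App}.} The plan is to prove the statement by contradiction through a recoloring argument: if \emph{no} vertex has a neighbour in every other color class, then the color class $C_K$ can be dissolved into the remaining classes, producing a proper $(K-1)$-coloring and contradicting the fact that the chromatic number of $\mathcal{G}$ is $K$.

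Concretely, suppose for contradiction that for every vertex $v$, writing $c(v)$ for its color, there is a color $f(v)\neq c(v)$ with $n_v\cap C_{f(v)}=\varnothing$, i.e.\ $v$ has no neighbour in the class $C_{f(v)}$. First I would define a new color assignment $c'$ by setting $c'(v)=f(v)$ for every $v\in C_K$ — note $f(v)\in\{1,\dots,K-1\}$ since $f(v)\neq K$ — and $c'(v)=c(v)$ for all $v\notin C_K$; thus $c'$ uses only the $K-1$ colors $\{1,\dots,K-1\}$.

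Next I would verify that $c'$ is still proper by checking edges one by one. For an edge $(u,w)$ with $u,w\notin C_K$ nothing changed, so $c'(u)=c(u)\neq c(w)=c'(w)$. There is no edge with both endpoints in $C_K$, since $C_K$ is an independent set of the proper coloring $c$. Finally, for an edge $(u,w)$ with $u\in C_K$ and $w\notin C_K$, we have $c'(u)=f(u)$ and $c'(w)=c(w)$; since $w\in n_u$ while $n_u\cap C_{f(u)}=\varnothing$, the original color of $w$ satisfies $c(w)\neq f(u)$, hence $c'(w)\neq c'(u)$. Therefore $c'$ is a proper $(K-1)$-coloring, contradicting $\chi(\mathcal{G})=K$. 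Consequently the assumption fails: there exists a vertex $v$ such that for \emph{every} color $j\neq c(v)$ one has $n_v\cap C_j\neq\varnothing$, and relabelling so that $v$ is vertex $1$ and its class is $C_1$ yields the claim, with $v_k$ chosen as any element of $n_1\cap C_k$ for $k=2,\dots,K$.

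This argument is short, so there is no deep obstacle; the one point that requires care is the simultaneity of the recoloring inside $C_K$. One must recolor the whole class $C_K$ at once and use that it is independent, so that no two recolored vertices can clash with each other and every recolored vertex keeps all of its neighbours at their original colors — it is precisely this observation (rather than a naive vertex-by-vertex move, which would not obviously reduce the number of colors) that makes the proof go through.
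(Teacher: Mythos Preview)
Your proof is correct and follows essentially the same contradiction-and-recoloring strategy as the paper: both show that if no vertex meets the condition, then every vertex of $C_K$ can be reassigned to some other color class, producing a proper $(K-1)$-coloring and contradicting $\chi(\mathcal{G})=K$. Your simultaneous recoloring of all of $C_K$ with an explicit edge-by-edge verification is in fact slightly cleaner than the paper's version, which first picks an extremal vertex (the one with the maximal number of distinct-colored neighbours) and then recolors the vertices of $C_K$ one at a time via an informal ``repeat this procedure'' step.
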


\begin{proof}
We prove this lemma by contradiction. Assuming that we could not find the vertex satisfying the requirement, we denote the vertex with the maximal different color neighbors as vertex 1. Thus there exists a color set, for example $C_K$, no vertex from it is neighbor to vertex 1. Then we pick one vertex from $C_K$, denoted as vertex $v_K$, 
we can always find a color set $C_q$ $(q\ne K)$ such that every vertex in $C_q$ is not neighbor to $v_K$, otherwise it is contradictive to the assumption that vertex 1 has the maximal different color neighbors. Then we can color this vertex $v_K$ with color $q$. Repeat this procedure for the vertices in color set $C_K$ and we find that this graph can be colored with $K-1$ color, which causes a contradiction. 
\end{proof}

\section{Robust self-test of graph state}
In this section, we give the detailed explanation about numerical robustness results of self-testing shown in FIG \ref{Fig:compare}. We would like to lower bound the fidelity between the measured state $\rho$ and the target graph state $\psi_G$ (under local isometry), with the knowledge of the Bell inequality value. To this end, mathematical equivalently, one can adopt local extraction channel and the maximal fidelity shows
\begin{equation}
\begin{aligned}
F=\max_{\Lambda=\Lambda_1\otimes\Lambda_2\cdots\Lambda_N } \bra{\psi_G} \Lambda(\rho) \ket{\psi_G},
\end{aligned}
\end{equation}
where $\Lambda_i$ is the local channel on $i$-th party. Alternatively, the fidelity can be written as follows,
\begin{equation}
\begin{aligned}
\Tr[\rho\Lambda_1^\dag \otimes\Lambda_2^\dag \cdots\Lambda_N^\dag (\ket{\psi_G}\bra{\psi_G})]
\end{aligned}
\end{equation}
where $\Lambda_i^\dag$ is the dual channel of $\Lambda_i$. Note that here the dual of the extraction map acts on the graph state, and we denote the state after this dual channel as $K=\Lambda_1^\dag \otimes\Lambda_2^\dag \cdots\Lambda_N^\dag (\ket{\psi_G}\bra{\psi_G})$.

To find a reliable lower bound of the fidelity from the Bell inequality value, one can choose appropriate parameters $s$ and $\mu$ such that the following inequality on operators always holds,
\begin{equation}\label{Eq:Lbound}
\begin{aligned}
K\geq s\mathcal{B}+\mu \mathbb{I}
\end{aligned}
\end{equation}
where $\mathcal{B}$ is Bell inequality to self-test the state. In this way, the fidelity is bounded as $F\geq s\beta+\mu$, with $\beta=\Tr(\rho\mathcal{B})$ the Bell inequality value. 

Since the measurement of $\mathcal{B}$ is restricted to the dichotomic scenario, on account of the Jordan lemma, one can reduce the state to the N-qubit space, and the possible measurements can be parameterized by  the angles $\theta_i\in[0,\pi/2]$ as,
\begin{equation}
\begin{aligned}
A_i&=\cos \theta_i X_i + \sin \theta_i Z_i,\\
B_i&=\cos \theta_i X_i - \sin \theta_i Z_i
\end{aligned}
\end{equation}
for $i\in T$, the rotated set, and for other qubits
\begin{equation}
\begin{aligned}
A_i&=\cos \theta_i H_i + \sin \theta_i V_i,\\
B_i&=\cos \theta_i H_i - \sin \theta_i V_i
\end{aligned}
\end{equation}
where $H_i(V_i)=(X_i\pm Z_i)/\sqrt{2}$. Consider a specific extraction channel in Ref.~\cite{kaniewski2016analytic}, 
\begin{equation}
\begin{aligned}
\Lambda_i(\rho)=\frac{1+g(x)}{2}\rho+\frac{1-g(x)}{2}\Gamma_i(x)(\rho)\Gamma_i(x),
\end{aligned}
\end{equation}
where $g(x)=(1+\sqrt{2})(\sin x+\cos x+1)$, and  $\Gamma_i(x)$ is the operator on $i$-th qubit: for $i\in T$, $\Gamma_i(x)=X_i(Z_i)$ as $x< (\geq) \pi/4$; for $i\notin T$, $\Gamma_i(x)=H_i(V_i)$ as $x< (\geq) \pi/4$. Now the Bell inequality $\mathcal{B}$ and the operator $K$ are both parameterized with $\theta_i$, and the inequality in Eq.~\eqref{Eq:Lbound} shows, 
\begin{equation}\label{Eq:Lbound1}
\begin{aligned}
K(\theta_1, \theta_2, \cdots, \theta_N) \geq s\mathcal{B}(\theta_1, \theta_2, \cdots, \theta_N)+\mu \mathbb{I}.
\end{aligned}
\end{equation}
where we should find an optimal $s$ and $\mu$ for all possible $\theta_1, \theta_2, \cdots, \theta_N$.

In FIG.~\ref{Fig:compare}, all the robustness results for 3, 4-partite GHZ or cluster states are obtained from the above inequality numerically. To be specific, given a fixed $s$, we find the minima of the minimal eigenvalue of $K(\vec{\theta})-s\mathcal{B}(\vec{\theta})$ for all $\vec{\theta}$. 
A slower slope indicates a better bound. Thus, to find the optimal linear bound, we let the relation $s\beta_Q+\mu=1$ hold, that is, the fidelity approaches $1$ for the maximal quantum value and find the minimum $s$. 
We list all the obtained $s$ and $\mu$ in FIG.~\ref{Fig:compare} as follows.

\begin{table}[ht]
\begin{tabular}{|cl|cc|}
  
  \hline
  &3-qubit GHZ (cluster) & $s$ & $\mu$ \\[1mm]
  \hline
 & $\mathbf{1}$ \& $\mathbf{2}$  & 0.906
                        & -2.4686
                            \\[1mm]
            \hline
 & $\mathbf{3}$  \& $\mathbf{4}$   & 0.6036
                        & -2.4145
                              \\[1mm]
  \hline
    & Mermin \cite{kaniewski2016analytic} & $\frac{2+\sqrt{2}}{8}$
                        & $-\frac{1}{\sqrt{2}}$
                           \\[1mm]

  \hline
\end{tabular}
\caption{Numerical fidelity bound for the 3-qubit GHZ (cluster) state.   }
\label{Table:3GHZresults}
\end{table}

\begin{table}[ht]
\begin{tabular}{|cl|cc|}
  
  \hline
  &4-qubit GHZ  & $s$ & $\mu$ \\[1mm]
  \hline
 & $\mathbf{1}$ \& $\mathbf{2}$  & 1
                        &$-1-2\sqrt{2}$
                            \\[1mm]
            \hline
 & $\mathbf{3}$    & 0.69
                        & -3.5931
                              \\[1mm]
  
 \hline
  & $\mathbf{4}$   & 0.49
                        & -3.1578
                              \\[1mm]
  
  \hline
    & Mermin  & 0.219
                        & -0.752
                           \\[1mm]

  \hline
\end{tabular}
\caption{Numerical fidelity bound for the 4-qubit GHZ state.   }
\end{table}

\begin{table}[ht]
\begin{tabular}{|cl|cc|}
  \hline
  &4-qubit cluster & $s$ & $\mu$ \\[1mm]
  \hline
 & $\mathbf{1}$  & 1
                        & $-1-2\sqrt{2}$
                            \\[1mm]
     \hline
  & $\mathbf{2}$   & 0.7400
                        & -3.9262
                            \\[1mm]                          
            \hline
 & $\mathbf{3}$  \& $\mathbf{4}$   & 0.6200
                        & -2.5071
                              \\[1mm]
  \hline
\end{tabular}
\caption{Numerical fidelity bound for the 4-qubit cluster state.   }
\end{table}

\section{Device-independent Entanglement Witness}
As a side result of the robust self-testing bound, applying the linear self-testing bound shown above, we can also construct device-independent entanglement witness.
In the following, we give genuine entanglement bounds $\beta_{0.5}$ and the detailed construction for \emph{Single pair} inequality with the best known robustness bound. The device-independent genuine entanglement witness of 3-party, 4-party GHZ and 4-party cluster are shown, respectively. Any violation of these inequalities implies the existence of genuine entanglement and the similar method can also be used to detect more detailed entanglement structures in a device-independent manner. 
\begin{widetext}
   \begin{equation}
  \begin{aligned}
  &\big \langle(A_1+B_1)B_2 \big \rangle+ \big \langle(A_1-B_1)A_2B_3 \big\rangle +
   \big\langle B_2A_3 \big\rangle
  \stackrel{\text{bi-sep.}}{\le} 3.2766 \\
  & \big\langle(A_1+B_1)B_2B_3B_4 \big \rangle+  \big\langle(A_1-B_1)A_2 \big\rangle + \big\langle A_2A_3  \big\rangle+ \big\langle A_2A_4 \big \rangle
  \stackrel{\text{bi-sep.}}{\le} \frac{3}{2}+2\sqrt{2} \\
   &\big  \langle(A_1+B_1)B_2\big  \rangle+ \big  \langle(A_1-B_1)A_2B_3  \big  \rangle +\big  \langle B_2A_3B_4 \big \rangle+\big  \langle B_3A_4 \big  \rangle
  \stackrel{\text{bi-sep.}}{\le} \frac{3}{2}+2\sqrt{2}  \\
 \end{aligned}
\end{equation}  
\end{widetext}

\end{appendix}
\end{document}